\begin{document}

\title{Fault Detection for Timed FSM with Timeouts by  Constraint Solving}

\author{ Omer Nguena Timo \and Dimitri Prestat \and  Florent Avellaneda}

\institute{Computer Research Institute of Montreal, CRIM\\
Montreal, Canada \\   \email{\{omer.nguena-timo, dimitri.prestat,  florent.avellaneda\}@crim.ca}
}
\maketitle

\begin{abstract}
Recently, an efficient constraint solving-based approach has been developed  to detect logical faults in systems specified with classical finite state machines (FSMs). The approach is unsuitable to detect  violations of time constraints. 
In this paper, we lift the approach to generated tests detecting both logical faults and violations of time constraints in  systems specified with timed FSMs with timeouts (TFSMs-T). We propose a method to verify whether a given test suite is complete, i.e., it detects all the faulty implementations in a fault-domain and a method to generate a complete test suite.  We conduct experiments to evaluate the scalability of the proposed methods.
\end{abstract}

\keywords{Finite State Machine; Timed Extended Finite State Machine;  Conformance testing; Fault model-based test generation; Complete test suite}

\section{Introduction} 
The fault domain coverage criterion can be adopted to generate  tests   revealing  faults in safety/security critical systems under test (SUT)~\cite{Androutsopoulos:2013}.
The domain can be built from referenced databases\footnote{E.g.: \url{https://nvd.nist.gov/}} or  expert knowledge. Efficient test generation methods are needed especially for the fault domains of important sizes, which has motivated the development of  an approach~\cite{Petrenko2016a,Petrenko2016b} leveraging on recent advances in the field of (Boolean) constraint solving. 
The approach has been elaborated  to  detect logical faults in reactive systems specified with finite state machines (FSMs).  
We plan to lift the approach to detect both logical faults and violations of time constraints in reactive systems; especially, we focus on reactive systems specified with timed FSMs with timeouts (TFSMs-T).

\par TFSM-T~\cite{Merayo2008a,2014arXiv1408.5967B} is 
an extension of the classical FSM with timeout transitions for expressing time constraints. They define  timeouts and the next states to be reached if no input is applied before the timeouts expire; otherwise outputs defined by  input/output transitions are produced. 
Although they express limited types of time constraints as compared to other timed FSMs~\cite{2014arXiv1408.5967B}, TFSMs-T have been used to specify reactive systems such as web applications~\cite{ZhigulinYMC11} and protocols~\cite{Zhigulin2012,RFC1350,Huang1998}. 
Logical faults in TFSMs-T
correspond to  unexpected outputs or unexpected state changes.
Reducing and increasing waiting time are violations of time constraints. An implementation under test for a given specification TFSM-T can be represented with a mutated version of the specification TFSM-T also called a mutant. A mutant can be obtained  by seeding the specification with an arbitrary number of faults.  A fault domain for a specification is then a finite set of possible mutants; it can be built from a list of identified faults to be detected  in systems under test. A mutant is nonconforming if its timed output sequence differs from that of the specification  for some  timed input sequences (tests). A complete test suite for a fault domain detects all nonconforming mutants in the domain. 
\par Model-based testing with guaranteed fault coverage has been investigated for untimed and timed models. 
Finite state machines  can be preferred over label transition systems  for representing systems. This is probably because  FSMs have been  used early in testing digital circuits~\cite{YANNAKAKIS1995} and protocols~\cite{Bochmann:1994}, and they do not permit  nondeterministic choices between the application of inputs and the production of outputs. 
Several approaches have been investigated for  FSM-based test generation with guaranteed fault coverage~\cite{Bochmann:1994,YANNAKAKIS1995}.
FSMs have been extended to express time constraints, which  has resulted in a variety of timed FSMs ~\cite{MERAYO2008,2014arXiv1408.5967B,El-Fakih:2009,ZhigulinYMC11,ELFAKIH2014}. Timed FSMs are not compared to the well-known timed automata~\cite{ALUR1994} for which testing approaches have been developed~\cite{Krichen2009,DSSOULI201795,Nguena13} . 
Testing approaches~\cite{El-Fakih:2009,ZhigulinYMC11,ELFAKIH2014,Tvardovskii2017} for timed FSMs integrate the reasoning on time constraints in well-known FSM-based testing approaches~\cite{YANNAKAKIS1995,Broy:2005}. 
The work ~\cite{Hanh:2014} evaluate the application of different meta-heuristic algorithms to detect mutants of Simulink models. Meta-heuristic algorithms do not guarantee the  detection of all pre-defined nonconforming mutants.  %
The  methods in ~\cite{Petrenko2016a,Petrenko2016b} to verify and generate a complete test suite for FSM specifications are based solving constraints or Boolean expressions, which allows to take advantage of the efficiency of constraint/SAT solvers~\cite{DPLL62,Soos2009}; this is a novelty as compared to the work  in~\cite{Simao2010} and the well-known approaches such as the W-method. 
The high efficiency of using constraint solving  in testing software code was demonstrated in ~\cite{Gotlieb1998}. The constraints specify  the mutants surviving given tests; they are defined over the transitions in  executions of the mutants. 
The executions are selected with a so-called distinguishing automaton of the specification and the fault domain that is  compactly modeled with a nondeterministic FSM called a mutation machine. 
A solution of the constraints is a mutant which, if it is nonconforming,  allows to generate a test detecting the mutant and many others; then the constraints are  upgraded to generate new tests. 
\par Our contribution is to lift the methods in ~\cite{Petrenko2016a,Petrenko2016b} for verifying and generating complete test suites for fault domains for TFSM-T specifications. In our work, specifications and mutants are deterministic and input-complete TFSMs-T. 
We define a new distinguishing automaton with timeouts for a TFSM-T specification and a fault domain. The automaton serves to extract  transitions in detected mutants and build constraints for specifying test-surviving mutants. 
Extracting the transitions, we pair input/output transitions with timeout-unexpired transitions allowing to pass the input/output transitions; this is formalized with a notion of "comb". 
We have implemented the methods in a prototype tool which we use to evaluate the efficiency of the methods and compare our results with those of the related work.

\paragraph{Organization of the paper.}  The next section 
introduces a fault model for TFSMs-T and 
the coverage of  fault models with complete test suites. 
In  Section~\ref{sec:constraint} we build constraints  
for the analysis of timed input sequences and the 
generation of complete test suites. The analysis and 
generation methods are presented in Section~\ref{sec:analysis:generation}. 
Section~\ref{sec:exp} presents an empirical evaluation 
of the efficiency of the methods with the prototype tool. We conclude the paper in Section~\ref{sec:conc}.
\section{Preliminaries}
\par  Let  $\mathbb{R}_{\geq 0}$ and $\mathbb{N}_{\geq 1}$ denote the sets of non-negative real numbers and non-null natural numbers, respectively.   

\subsection{TFSM with timeouts}
\begin{definition}[TFSM with Timeouts] \label{def:TFSM} A {\em timed finite state machine with Timeouts}  (shortly, TFSM-T)
is a 6-tuple 
$\mathcal{S} = (S, s_0, I, O, \lambda_S, \Delta_\mathcal{S})$ 
where $S$, $I$ and  $O$ are  finite non-empty set of {\em states}, 
{\em inputs} and {\em outputs}, respectively,  $s_0$ is the 
{\em initial} state, $\lambda_\mathcal{S} \subseteq S \times I \times O \times S$ 
is an input/output transition relation and  
$\Delta_\mathcal{S} \subseteq S \times (\mathbb{N}_{\geq 1} \cup \{\infty\}) \times S$ 
is a timeout transition relation defining at least one timeout transition in every state. 
\end{definition}

\par 
Our definition of TFSM-T extends the definition in~\cite{2014arXiv1408.5967B} by allowing  multiple timeout transitions  in the same state, which we use later to compactly represent sets of TFSMs-T.  An input/output  transition $(s,i,o,s')\in \lambda_\mathcal{S}$ defines 
the output $o$ produced in its source state $s$ when input $i$ is applied.  A timeout transition $(s,\delta,s') \in \Delta_\mathcal{S}$ defines the timeout $\delta$ in state $s$.
A timeout transition can be taken if no input is applied at the current state before the timeout of the transition expires. It is not possible to wait for an input beyond the maximal timeout defined in the current state. 
%


%

\par A TFSM-T uses a single clock  
for recording  the time elapsing in the states and determining when timeouts expire. The clock is reset when the transitions are passed.
A {\em timed state}  of TFSM-T $\mathcal{S}$ is a pair 
$(s, x) \in S\times \mathbb{R}_{\geq 0}$ 
where $s \in S$ is a state of $\mathcal{S}$ and $x \in \mathbb{R}_{\geq 0}$ 
is the current value of the clock and $x < \delta$ for 
some $\delta \in \mathbb{N}_{\geq 1} \cup \{\infty\}$ 
such that $(s, \delta, s') \in \Delta_\mathcal{S}$. 
An {\em execution step} of $\mathcal{S}$ in timed state $(s,x)$ corresponds either to the time elapsing  or the passing of an input/output or  timeout transition; it is  {\em permitted} by a transition of $\mathcal{S}$. Formally,  $stp=(s,x)a(s',x') \in (S\times \mathbb{R}_{\geq 0}) \times ((I\times O) \cup \mathbb{R}_{\geq 0}) \times (S\times \mathbb{R}_{\geq 0})$ is an execution step  if it satisfies one of the  following conditions:
\begin{itemize}
\item (timeout step) $a \in \mathbb{R}_{\geq 0} $,  $x' = 0$    and $x + a = \delta$ for some $\delta$ such that $(s,\delta,s') \in \Delta_\mathcal{S}$; then $(s,\delta,s')$ is said to  permit the step.
\item (time-elapsing step) $a \in \mathbb{R}_{\geq 0} $,  $x'=x + a$, $s' =s$ and $x + a < \delta$
for some $\delta$ and $s'' \in S$ such that there exists $(s,\delta,s'') \in \Delta_\mathcal{S}$; then $(s,\delta,s'')$ is said to permit the step.
\item (input/output step) $a = (i,o)$ with $(i,o) \in I \times O$, $x' =0$ and there exists  $(s,i,o,s') \in \lambda_\mathcal{S}$; then $(s,i,o,s')$ is said to permit the step.
\end{itemize}
Time-elapsing steps satisfy the following time-continuity property w.r.t the same timeout transition:
if $(s_1,x_1)d_1(s_2,x_2)d_2(s_3,x_3)\ldots d_{k-1}\linebreak(s_k,x_k)$ is a sequence of time-elapsing steps permitted by the same timeout transition $t$, then $(s_1,x_1)d_1+d_2+\ldots +d_{k-1}(s_k,x_k)$ is a time-elapsing step permitted by $t$. In the sequel, any time-elapsing step permitted by a timeout transition can be represented with a sequence  of time-elapsing steps permitted by the same timeout transitions. 
\par  An {\em execution} of  $\mathcal{S}$ in timed state $(s,x)$ is a sequence of steps $e=stp_1stp_2\ldots stp_n$ with $stp_{k} =(s_{k-1},x_{k-1})a_{k}(s_{k},x_{k})$, $k\in [1,n]$ such that the following conditions hold:
\begin{itemize}
\item $(s_0,x_0) =(s,x)$,    
\item $stp_1$ is not an input/output step,
\item $stp_{k}$ is a input/output step implies that $stp_{k-1}$ is a time-elapsing step for every $k\in [1..n]$.
\end{itemize}
If needed, the elapsing of zero time units can be inserted between a timeout step and an input/output step.
%
%
\newcommand{\trname}[1]{$^{[\boldsymbol{#1}]}$}
\newcommand{\runame}[1]{$^{(\boldsymbol{#1})}$}
\begin{figure*}[t]
  \centering
  \subfloat[A specification TFSM $\mathcal{S}_1$]{
     \label{fig:spec}
     \scalebox{0.80}{     
         \begin{tikzpicture}[->,>=stealth',shorten >=1pt,auto,node distance=2.3cm, semithick]
    \node[state] (1)                    {$s_1$};
    \node[state]         (2) [right of=1] {$s_2$};
    \node[state]         (3) [right of=2] {$s_3$};
    \node[state]         (4) [below of=2] {$s_4$};

    \path (1) edge [loop above] node[midway, above] {$a/x$ \trname{t_1}}         (1)
              edge              node[midway, above] {$b/x$ \trname{t_2}}         (2)
              edge [bend left]  node[sloped,below] {$4$ \trname{t_3}}    (4)
          (2) edge [loop , in=110, out=140, looseness=10] node[midway, above] {b/x  \trname{t_4}}         (2)
              edge [bend left]  node[midway, above] {$a/x$ \trname{t_5}}         (3)
              edge [loop, in=45, out=80, looseness=10]	node[above, xshift=0.2cm] {$\infty$ \trname{t_6}}         (2)
          (3) edge [loop above] node[midway, above] {$a/x$ \trname{t_7}}         (3)
              edge              node[midway, below right] {$b/x$ \trname{t_8}}   (4)
              edge [bend left]	node[midway, above] {$5$ \trname{t_9}}         (2)
          (4) edge [bend left]  node[sloped, below] {$a/y$ \trname{t_{10}}} (1)
              edge [loop below] node[midway, below] {$b/x$ \trname{t_{11}}}       (4)
              edge [loop right] node[midway, right] {$\infty$ \trname{t_{12}}}      (4);
  \end{tikzpicture}
       }
     }\hspace{2pt}
  \subfloat[A mutation TFSM $\mathcal{M}_1$]{
    \label{fig:mutmachine}
    \scalebox{0.80}{ 
        \begin{tikzpicture}[->,>=stealth',shorten >=1pt,auto,node distance=2.8cm, semithick]
    \node[state] (1)                    {$s_1$};
    \node[state]         (2) [right of=1] {$s_2$};
    \node[state]         (3) [right of=2] {$s_3$};
    \node[state]         (4) [below of=2] {$s_4$};

    \path (1) edge [loop above] node[midway, above] {$a/x$ \trname{t_1}}         (1)
              edge              node[midway, above] {$b/x$ \trname{t_2}}         (2)
              edge [bend left]  node[sloped, below] {$4$ \trname{t_3}}    (4)
              edge [dashed]     node[sloped, below] {$3$\trname{t_{16}}}    (4)
          (2) edge [loop , in=110, out=140, looseness=10] node[midway, above] {$b/x$ \trname{t_4}}         (2)
              edge [bend left]  node[midway, above] {a/x \trname{t_5}}         (3)
              edge [loop, in=45, out=80, looseness=10]	node[above, xshift=0.2cm] {$\infty$ \trname{t_6}}         (2)
          (3) edge [loop above] node[midway, above] {$a/x$\trname{t_7}}         (3)
              edge [dashed, loop right] node[midway, above] {$a/y$ \trname{t_{14}}} (3)
              edge [dashed, loop below] node[midway, below] {$b/x$ \trname{t_{15}}} (3)
              edge              node[midway, below right] {$b/x$ \trname{t_8}}   (4)
              edge [bend left]	node[midway, above] {$5$ \trname{t_9}}         (2)
              edge [dashed, bend right=60]	node[midway, above] {$8$ \trname{t_{17}}}         (1)
          (4) edge [bend left=45]  node[sloped, below] {$a/y$ \trname{t_{10}}} (1)
              edge [dashed]     node[midway, right] {$a/y$ \trname{t_{13}}}      (2)
              edge [loop below] node[midway, below] {$b/x$ \trname{t_{11}}}      (3)
              edge [loop right] node[midway, right] {$\infty$ \trname{t_{12}}}      (0);
  \end{tikzpicture}
     }
    }\hspace{2pt}
  \subfloat[A mutant TFSM $\mathcal{P}_1$]  { 
  \label{fig:mutant}
   \scalebox{0.85}{
    \begin{tikzpicture}[->,>=stealth',shorten >=1pt,auto,node distance=2.3cm, semithick]
    \node[state] (1)                    {$s_1$};
    \node[state]         (2) [right of=1] {$s_2$};
    \node[state]         (3) [right of=2] {$s_3$};
    \node[state]         (4) [below of=2] {$s_4$};

    \path (1) edge [loop above] node[midway, above] {$a/x$ \trname{t_1}}         (1)
              edge              node[midway, above] {$b/x$ \trname{t_2}}         (2)
              edge [dashdotted]  node[sloped,below] {$3$ \trname{t_{16}}}    (4)
          (2) edge [loop , in=110, out=140, looseness=10] node[midway, above] {$b/x$  \trname{t_4}}         (2)
              edge [bend left]  node[midway, above] {$a/x$ \trname{t_5}}         (3)
              edge [loop, in=45, out=80, looseness=10]	node[above, xshift=0.2cm] {$\infty$ \trname{t_6}}         (2)
          (3) edge [loop above] node[midway, above] {$a/x$ \trname{t_7}}         (3)
              edge              node[midway, below right] {$b/x$ \trname{t_8}}   (4)
              edge [bend left]	node[midway, above] {$5$ \trname{t_9}}         (2)
          (4) edge [dashdotted]  node[sloped, below] {$a/y$ \trname{t_{13}}} (2)
              edge [loop below] node[midway, below] {$b/x$ \trname{t_{11}}}       (4)
              edge [loop right] node[midway, right] {$\infty$ \trname{t_{12}}}      (4);
  \end{tikzpicture}
   }
   }
  \caption{Examples of TFSMs, state $s_1$ is initial. Dashed arrows represent mutated transitions and solid arrows represent transitions of the specification. Names of transitions appear in brackets.}
  \label{fig:tfsm}
\end{figure*}
Let $d_1 d_2 \ldots d_l \in \mathbb{R}_{\geq 0}^l$ be a  non-decreasing sequence of real numbers, i.e., 
$d_k \leq d_{k+1}$ for every $k=1..l-1$. 
The sequence  $\sigma_e = ((i_1,o_1), d_1)((i_2,o_2), d_2)\ldots((i_l,o_l), d_l)$ in $((I\times O)\times \mathbb{R}_{\geq 0})^*$ with $l<n$ is a {\em timed input/output sequence} of execution $e$ if $(i_1,o_1)(i_2,o_2)\ldots(i_l,o_l)$ is the maximal sequence of input/output pairs occurring in $e$. 
The delay $d_k$ for each input/output pair $(i_k,o_k)$, with $k=1..l$, is  the amount of the time elapsed from  the beginning of $e$ to the occurrence of $(i_k,o_k)$. 
The {\em timed input sequence} and the {\em timed output sequence} of $e$ are   $(i_1, d_1)(i_2, d_2)...(i_l, d_l)$ and $(o_1, d_1)(o_2, d_2)...(o_l, d_l)$, respectively. We let  $inp(e)$ and $out(e)$  denote the timed input and output sequences of execution $e$.
Given a timed input sequence $\alpha$, let $out_\mathcal{S}((s,x), \alpha)$ denote the set of all timed output sequences which can be produced by $\mathcal{S}$ when  $\alpha$ is applied in $s$, i.e.,  $out_\mathcal{S}((s,x), \alpha) =\{ out(e)~\mid~e$ is  an  execution of   $\mathcal{S} \text{  in  } ( s,x) \text{ and } inp(e) = \alpha\}$.

\par A TFSM-T $\mathcal{S}$ is {\em deterministic} (DTFSM-T) if it defines at most one input-output transition for each tuple $(s, i) \in S \times I$ and exactly one timeout transition in each state; otherwise, it is {\em nondeterministic}. 
$\mathcal{S}$ is {\em initially connected} if it has an execution from its initial state to each of its states. 
$\mathcal{S}$ is {\em complete} if for each tuple $(s, i) \in S \times I$ it defines at least one input-output transition. Note that the set of timed input sequences defined in each state of a complete machine $\mathcal{S}$ is $(I \times \mathbb{R}_{\geq 0})^*$.

\par We define distinguishability and equivalence relations between states of complete TFSMs-T. Similar notions were introduced in~\cite{ZhigulinYMC11}. Intuitively, states producing different timed output sequences in response to the same timed input sequence are
distinguishable. 
Let $p$ and $s$ be the states of two complete TFSMs-T over the same inputs and outputs. Given a timed input sequence $\alpha$, $p$ and $s$ are {\em distinguishable} (with
distinguishing input sequence $\alpha$), denoted $p \not\simeq_\alpha s$, if the sets of timed output sequences in 
$out_\mathcal{S}((p,0),\alpha)$ and $out_\mathcal{S}((s,0),\alpha)$ differ; otherwise they are {\em
equivalent} and we write $s \simeq p$, i.e., if the sets of timed output sequences coincide for all timed input sequence $\alpha$.

\begin{definition}[Submachine]  TFSM-T $\mathcal{S} = (S, s_0, I, O, \lambda_\mathcal{S}, \Delta_\mathcal{S})$ is a {\em submachine} of  TFSM-T  $\mathcal{P} = (P, p_0, I, O, \lambda_\mathcal{P}, \Delta_\mathcal{P})$ if   $S \subseteq P$, $s_0 = p_0$, $\lambda_\mathcal{S} \subseteq \lambda_\mathcal{P}$ and $\Delta_\mathcal{S} \subseteq \Delta_\mathcal{P}$.
\end{definition}

\begin{example}\label{ex:example1}
Figure~\ref{fig:tfsm} presents two initially connected TFSMs-T $\mathcal{S}_1$ and $\mathcal{M}_1$. $\mathcal{M}_1$ is nondeterministic; it defines two timeout transitions in states $s_1$ and $s_3$, which is not allowed in~\cite{2014arXiv1408.5967B}. $\mathcal{S}_1$ is  a complete deterministic submachine of $\mathcal{M}_1$. 
\par Here are four executions of the TFSM-T $\mathcal{M}_1$ in Figure~\ref{fig:mutmachine}, where the transitions defining the steps appear below the arrows:
\begin{enumerate}
	\item $(s_1, 0)\xrightarrow[t_3 | t_{16}]{2}(s_1, 2)\xrightarrow[t_2]{b, x}(s_2, 0)\xrightarrow[t_6]{1}(s_2, 1)\xrightarrow[t_5]{a, x}(s_3, 0)\xrightarrow[t_{17}]{8}(s_1, 0)\xrightarrow[t_3]{4}(s_4, 0)\xrightarrow[t_{12}]{0.5}(s_4, 0.5)\xrightarrow[t_{10}]{a, y}(s_1, 0)$
    \item $(s_1, 0)\xrightarrow[t_3]{4}(s_4, 0)\xrightarrow[t_{12}]{0.5}(s_4, 0.5)\xrightarrow[t_{10}]{a, y}(s_1, 0)\xrightarrow[t_{16}]{3}(s_4, 0)\xrightarrow[t_{12}]{0.7}(s_4, 0.7)\xrightarrow[t_{13}]{a, y}(s_2, 0)$
    \item $(s_1, 0)\xrightarrow[t_3]{3.5}(s_1, 3.5)\xrightarrow[t_2]{b, x}(s_2, 0)\xrightarrow[t_6]{1}(s_2, 1)\xrightarrow[t_5]{a, x}(s_3, 0)\xrightarrow[t_{17}]{8}(s_1, 0)\xrightarrow[t_3]{4}(s_4, 0)\xrightarrow[t_{12}]{0.5}(s_4, 0.5)\xrightarrow[t_{10}]{a, y}(s_1, 0)$
    \item $(s_1, 0)\xrightarrow[t_{16}]{3}(s_4, 0)\xrightarrow[t_{12}]{0.5}(s_4, 0.5)\xrightarrow[t_{11}]{b, x}(s_4, 0)\xrightarrow[t_{12}]{1}(s_4, 1)\xrightarrow[t_{13}]{a, y}(s_2, 0)\xrightarrow[t_{6}]{12.5}(s_2, 12.5)\xrightarrow[t_5]{a,x}(s_3, 0)$
\end{enumerate} 
Let us explain the first execution.  It has $8$ steps represented with arrows between timed states. The label above an arrow is either a delay in ${\mathbb{R}_{\geq 0}}$ or an input-output pair. The label below an arrow indicates the transitions permitting the step. The first, third and seventh steps of the first execution are time-elapsing. The second, fourth and last steps are input-output.  The fifth and the sixth steps are timeout. The first step is permitted by either transition $t_3$ or $t_{16}$ because their timeouts are not expired $2$ units after the machine has entered state $(s_1,0)$. The timeout of transition $t_{12}$ permitting the seventh step has not expired before  input-output transition $t_{10}$ is performed at the last step. The difference between the first and the third execution is that input $b$ is applied lately in the third execution, i.e., $3.5$ time units after the third execution has started. 
The timed input/output sequences for the four executions are   
$((b,x), 2)((a,x), 3)((a,y), 15.5)$, $((a,y), 4.5)((a,y), 8.2)$, $((b,x), 3.5)((a,x), 4.5)((a,y), 17)$ and $\linebreak((b,x), 3.5)((a,y), 4.5)((a,x), 17)$, respectively. The timed input sequence and the timed output sequence for the first execution are  $(b, 2)(a, 3)(a, 15.5)$ and $(x, 2)(x, 3) (y, 15.5)$. Similarly, we can determine the timed input and output sequences for the three other executions. The third and the fourth executions have the same timed input sequence but different timed output sequences.\\
\end{example}

Henceforth the TFSMs-T are complete and initially connected.

\subsection{Complete test suite for fault models}
 Let $\mathcal{S} = (S, s_0, I, O, \lambda_\mathcal{S}, \Delta_\mathcal{S})$ be a DTFSM-T, called the {\em specification} machine.
\begin{definition}[Mutation machine for a specification machine] A nondeterministic TFSM-T $\mathcal{M} = (M, m_0, I, O, \lambda_\mathcal{M}, \Delta_\mathcal{M})$ is a {\em mutation machine} of   $\mathcal{S}$ if $\mathcal{S}$ is a submachine of $\mathcal{M}$. Transitions in   $\lambda_\mathcal{M}$ but not in  $\lambda_\mathcal{S}$ or in $\Delta_\mathcal{M}$ but not in $\Delta_\mathcal{S}$ are called {\em mutated}.
\end{definition}
\newcommand{\Mut}{\textit{Mut}} 
A mutant is a deterministic submachine of $\mathcal{M}$ different from the specification. We let $\Mut(\mathcal{M})$ denote the set of mutants in $\mathcal{M}$. A mutant represents an implementation of the specification seeded with  faults. Faults are represented with mutated transitions  and every mutant defines a subset of them. Mutated transitions can represent transfer faults, output faults, changes of timeouts and adding of extra-states. 
\par A transition $t$   is {\em suspicious} in $\mathcal{M}$ if $\mathcal{M}$ defines another transition $t'$ from the source state of $t$ and either both $t$ and $t'$  have the same input or they are timeout transitions. A transition of the specification is called {\em untrusted} if it is suspicious in the mutation machine; otherwise, it is {\em trusted}. The set of suspicious transitions of $\mathcal{M}$ is partitioned into a set of untrusted transitions all defined in the specification and the set of mutated transitions undefined in the specification. \\

\par Let $\mathcal{P}$ be a mutant with an initial state $p_0$ of the mutation machine $\mathcal{M}$ of $\mathcal{S}$. We use the state equivalence relation $\simeq$ to define conforming mutants. 
\begin{definition}[Conforming mutants and detected mutants]\label{def:conforming}
Mutant $\mathcal{P}$  is {\em conforming} to  $\mathcal{S}$, if $p_0 \simeq s_0$; otherwise, it is {\it nonconforming} and a timed input sequence $\alpha$ such that $out_\mathcal{P}((p_0,0),\alpha) \neq out_\mathcal{S}((s_0,0),\alpha)$ is said to {\em detect} $\mathcal{P}$.
\end{definition}
We say that mutant $\mathcal{P}$ {\em survives} input sequence $\alpha$ if $\alpha$ does not detect $\mathcal{P}$.

 \par The set  $\Mut(\mathcal{M})$ of all mutants in mutation machine $\mathcal{M}$ is called a {\em fault domain} for $\mathcal{S}$. If $\mathcal{M}$ is deterministic and complete then  $\Mut(\mathcal{M})$ is empty. A general {\em fault model} is the tuple $\langle
\mathcal{S}, \simeq, \Mut(\mathcal{M})\rangle$ following~\cite{PetrenkoY92,Nguena17}. 
Let $\lambda_\mathcal{M}(s,i)$ denote the set of input/output transitions defined in state $s$ with input $i$ and $\Delta_\mathcal{M}(s)$ denote the set of timeout transitions defined in state $s$. The number of mutants in  $\Mut(\mathcal{M})$ is given by the formula
$ \displaystyle |\Mut(\mathcal{M})| = \Pi_{(s,i)\in S\times I}|\lambda_\mathcal{M}(s,i)| \times \Pi_{s \in S}|\Delta_\mathcal{M}(s)|-1 $,
where $\lambda_\mathcal{M}(s,i)$ denotes the set of input-output transitions with input $i$ defined in $s$ and $\Delta_\mathcal{M}(s)$ denotes the set of timeout transitions  in $s$.
The conformance relation
partitions the set $\Mut(\mathcal{M})$ into conforming mutants and nonconforming ones which  we need to detect.
\begin{definition}[Complete test suite]
A test for $\langle
\mathcal{S}, \simeq, \Mut(\mathcal{M})\rangle$ is a timed input sequence. A {\em complete test suite} for  $\langle
\mathcal{S}, \simeq, \Mut(\mathcal{M})\rangle$ is a set of test detecting all nonconforming mutants in $\Mut(\mathcal{M})$. 
\end{definition}

\begin{example}
\par  In Figure~\ref{fig:tfsm}, $\mathcal{M}_1$ is a mutation machine  for the specification machine $\mathcal{S}_1$. $\mathcal{M}_1$ and  $\mathcal{S}_1$ has the same number of states, meaning that the faults represented with mutated transitions in  $\mathcal{M}_1$ do not introduce extra-states.  The mutated transitions are represented by dashed lines. The transitions $t_3$, $t_{14}$ and $t_7$ are  suspicious; however  $t_1$ is not. $t_1$ is trusted and $t_7$ is untrusted. $t_{14}$ is neither trusted nor untrusted because it does not belong to the specification. $t_{14}$ defines an output fault on input $a$ since the expected output is defined with transition $t_7$. In state $s_3$, $t_{15}$ defines a transfer fault for input $b$ and  $t_{17}$ increases the expected timeout for $s_3$ and defines a transfer fault. The transition $t_{16}$ implements a fault created by reducing the timeout of $t_3$; it is defined in the mutant $\mathcal{P}_1$ in Figure~\ref{fig:mutant}.  For the timed input sequence $(b,3.5)(a,4.5)(a,17)$, the specification $\mathcal{S}_1$ and mutant performs the third and fourth executions in Example~\ref{ex:example1}, respectively.  $\mathcal{P}_1$ is nonconforming because the produced  timed input sequence $(x,3.5)(y,4.5)(x,17)$ differ from $(x,3.5)(x,4.5)(y,17)$, the timed output sequence produced by   $\mathcal{S}_1$. $\mathcal{M}_1$ defines $31$ mutants; some of them are conforming and we would like to generate a complete test suite detecting all the nonconforming mutants.
\end{example}

To generate a complete test suite, a test can be computed for each nonconforming mutant by enumerating the mutants one-by-one, which would be inefficient for huge fault domains. We avoid the one-by-one enumeration of the mutants with constraints specifying only test-surviving mutants.

\section {Specifying test-surviving mutants \label{sec:constraint}}
  The mutants surviving a test  cannot produce any execution with an unexpected  timed output sequence for the test. We encode them with Boolean formulas over Boolean transition variables of which the values indicate the presence or absence of transitions of the mutation machine in mutants. 
\subsection{Revealing combs and involved mutants}
The mutants detected  by a test $\alpha$ exhibit a revealing execution which produces an unexpected timed output sequence and has $\alpha$ as the test. Such an execution is permitted by transitions forming a comb-subgraph in the state transition diagram of the mutation machine. Intuitively,  a comb for an execution is nothing else but a path  augmented with  timeout-unexpired transitions, i.e., transitions of which the timeouts have not expired prior to performing an input-output step. These additional timeout transitions are also needed to specify detected mutants and eliminate them from the fault domain. To simplify the notation, we represent comb-subgraphs with sequences of transitions.

\par A {\em comb} of an execution $e=stp_1stp_2\ldots stp_n$ is the sequence of transitions $t_1t_2\cdots t_n$ such that  $t_i$ permits  $stp_i$ for every $i=1..n$. We say that comb $t_1t_2\cdots t_n$ is {\em enabled} by the input sequence of $e$. Each timeout  or input/output step in $e$ is permitted with a unique transition. However, each time-elapsing step is permitted by a  timeout transition with an unexpired timeout,  i.e., the timeout is not greater than the clock value in the source timed state of the  time-elapsing step. So, several combs can permit the same execution since  several timeout transitions permit the same time-elapsing step. Note that timeout transitions with finite or infinite timeouts appear in combs when they permit time-elapsing steps preceding input/output steps; later such timeout transitions participate in Boolean encodings of combs involving  detected mutants. 
 
\begin{figure}[t]
\centering
\scalebox{0.7}{\begin{tikzpicture}[->,>=stealth',shorten >=1pt,auto,node distance=2.3cm, semithick]
    \node[state] (1)                    {$s_1$};
    \node[state]         (2) [below of=1] {$s_4$};
    \node[state]         (3) [right of=1] {$s_2$};
    \node[state]         (4) [below of=3] {$s_2$};
    \node[state]         (5) [right of=3] {$s_3$};
    \node[state]         (6) [right of=5] {$s_1$};
    \node[state]         (7) [right of=6] {$s_4$};
    \node[state]         (8) [below of=7] {$s_4$};
    \node[state]         (9) [right of=7] {$s_1$};

    \path (1) edge  node[midway, right] {$4$ \trname{t_3}}         (2)
    		  edge  node[midway, above] {$b/x$ \trname{t_2}}         (3)
          (3) edge  node[midway, right] {$\infty$  \trname{t_6}}         (4)
              edge  node[midway, above] {$a/x$ \trname{t_5}}         (5)
          (5) edge  node[midway, above] {$8$ \trname{t_{17}}} (6)
          (6) edge  node[midway, above] {$4$ \trname{t_{3}}} (7)
          (7) edge  node[midway, right] {$\infty$ \trname{t_{12}}} (8)
              edge  node[midway, above] {$a/y$ \trname{t_{10}}} (9);
  \end{tikzpicture}
  }
\caption{A comb for the first execution in Example~\ref{ex:example1}. \label{fig:comb}}
\end{figure}

\begin{example} There are two combs for the first execution in Example~\ref{ex:example1}; this is because the first step of the execution is permitted either by $t_3$ or $t_{16}$.  The first comb $t_3  t_2  t_6  t_5  t_{17}  t_{3}  t_{12}  t_{10}$ is represented in Figure~\ref{fig:comb}. The timeouts of the transitions represented with vertical arrows have not expired in the execution when the input-output transition is performed. The timeouts of the transitions represented with horizontal arrows have expired. The first comb  is deterministic  whereas the second comb $t_{16}  t_2  t_6  t_5  t_{17}  t_{3}  t_{12}  t_{10}$ is nondeterministic because  $t_{16}$ and $t_3$ are two suspicious timeout  transitions defined in $s_1$. The first comb for the first execution is also a comb for the third execution in Example~\ref{ex:example1}; but the second comb is not a comb for the third execution. This is because the application of $b$ after $3.5$ time units in $s_1$ is possible if the timeout transition $t_{16}$ is not passed.  The second execution corresponds to a single nondeterministic comb $t_3 t_{12} t_{10}  t_{16} t_{12} t_{13}$. It is nondeterministic because   $t_{10}$ and $t_{13}$ are two input/output transitions defined in $s_4$ with the same input $a$. Each occurrence of the timeout transition $t_{12}$ before an input/output transition indicates that the timeout of $t_{12}$ has not expired before the input/output transition is passed.
\end{example} 

\par Combs for executions with unexpected timed output sequences reveal nonconforming mutants unless they belong only to nondeterministic submachines. The combs belonging  only to nondeterministic submachines have two transitions which are not defined in the same mutant; such combs are called  nondeterministic.
\begin{definition}[Deterministic and nondeterministic combs]
A comb is {\em nondeterministic} if  it has two suspicious input-output transitions or two timeout transitions defined in an identical state of the mutation  machine; otherwise, it is a {\em deterministic comb}.
\end{definition}

Clearly, combs in a mutant or the specification are deterministic because two suspicious transitions cannot be defined in an identical state in a mutant or the specification. A nondeterministic submachine of a mutation machine can contain both deterministic and nondeterministic combs.
%
\begin{figure*}[t]
\centering
\scalebox{0.7}{\begin{tikzpicture}[->,>=stealth', state/.style={draw,rectangle},shorten >=1pt,auto,node distance=3.0cm, semithick]
    \node[state] (1)              {$s_1,s_1,0,0$};
    \node[state]         (2) [right of=1] {$s_2,s_2,0,0$};
    \node[state]         (3) [right of=2] {$s_3,s_3, 0, 0$};
    \node[state]         (4) [right of=3] {$s_2, s_3, 0, 5$};
    \node[state]         (5) [right of=4] {$s_2, s_4, 0, 0$};
    \node[state]         (10) [below right of=4] {$s_2, s_1, \infty, 0$};
    \node[state]         (6) [below left of=10] {$s_2, s_3, 0, 0$};
    \node[accepting, state]         (7) [right of=6] {$\nabla$};
    \node[state]         (9) [below of=1] {$s_4, s_4, 0, 0$};
    \node[state]         (8) [right of=9] {$s_1, s_4, 3, 0$};
    \node[state]         (11) [left of=6] {$s_2,s_2,\infty, 0$};

    \path (1) edge              node[midway, above] {$b$ \trname{t_2} \runame{\mathcal{R}_1}}         (2)
              edge      node[sloped, below] {$3$ \trname{t_{16}} \runame{\mathcal{R}_4}}   (8)
              edge      node[sloped, below] {$4$ \trname{t_3} \runame{\mathcal{R}_3}}   (9)
          (2) edge [loop above] node[midway, above] {$b$ \trname{t_4} \runame{\mathcal{R}_1}}         (2)
              edge [loop below] node[midway, below] {$\infty$ \trname{t_6} \runame{\mathcal{R}_3}}         (2)
              edge [bend left]  node[midway, above] {$a$ \trname{t_5} \runame{\mathcal{R}_1}}         (3)
          (3) edge      node[midway, above] {$5$ \trname{t_{17}} \runame{\mathcal{R}_6}} (4)
              edge [bend left]	node[midway, below] {$5$ \trname{t_9} \runame{\mathcal{R}_3}}         (2)
          (4) edge              node[midway, above] {$b$ \trname{t_{8}} \runame{\mathcal{R}_1}}      (5)
              edge              node[below,sloped] {$b$ \trname{t_{15}} \runame{\mathcal{R}_1}}      (6)
              edge      node[midway, above right] {$3$ \trname{t_{17}} \runame{\mathcal{R}_5}}      (10)
          (5) edge              node[midway, right] {$a$ \trname{t_{10}} \runame{\mathcal{R}_2}}      (7)
              edge [bend left=100]  node[midway,right] {$a$ \trname{t_{13}} \runame{\mathcal{R}_2}}      (7)
              edge [loop above] node[midway, above] {$\infty$ \trname{t_{12}} \runame{\mathcal{R}_3}}         (5)
          (6) edge              node[midway, below left] {$a$ \trname{t_{6}} \runame{\mathcal{R}_1}}      (3)
              edge              node[midway, below] {$a$ \trname{t_{14}} \runame{\mathcal{R}_2}}      (7)
              edge       node[midway, below right] {$8$ \trname{t_{17}} \runame{\mathcal{R}_5}}      (10)
              edge      node[midway, below] {$5$ \trname{t_{9}} \runame{\mathcal{R}_5}}      (11);
  \end{tikzpicture}}
\caption{A fragment of the distinguishing automaton of $\mathcal{S}_1$ and $\mathcal{M}_1$ with timeouts\label{fig:da}; $(s_1,s_1,0,0)$ is initial.}
\end{figure*}
%
\begin{definition}[Revealing comb]\label{def:rev:comb}
Let $\pi$ be a comb of an execution $e_1$ from $(s_0,0)$. We say that 
$\pi$ is {\em $\alpha$-revealing} if there exists  an execution $e_2$ of $\mathcal{S}$ such that $\alpha =inp(e_1) = inp (e_2)$, $out(e_1) \neq out(e_2)$ while this does not hold for any prefix of $\alpha$. Comb $\pi$  is revealing if it is $\alpha$-revealing for some input sequence $\alpha$. 
\end{definition}
The specification contains no revealing comb because its executions always produce expected timed output sequences. Only mutants, nondeterministic or incomplete submachines of a mutation machine can contain revealing combs; but mutants contains only deterministic revealing combs.

Let $Rev_\alpha(\mathcal{P})$ denote the set of deterministic $\alpha$-revealing combs of machine $\mathcal{P}$. 

\begin{lemma}~\label{lemma:comb:equal}
$Rev_\alpha(\mathcal{M}) = \bigcup_{\mathcal{P} \in \Mut(\mathcal{M})} Rev_\alpha(\mathcal{P})$, for any test $\alpha$.
\end{lemma}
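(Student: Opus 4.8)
The plan is to prove the two inclusions separately, exploiting the fact that every deterministic comb "lives inside" some mutant.

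\medskip

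\textbf{The $\supseteq$ direction.} This is the easy one. Fix a test $\alpha$ and a mutant $\mathcal{P}\in\Mut(\mathcal{M})$. Since $\mathcal{P}$ is a submachine of $\mathcal{M}$, every transition of $\mathcal{P}$ is a transition of $\mathcal{M}$, and every execution of $\mathcal{P}$ from $(p_0,0)=(m_0,0)$ is an execution of $\mathcal{M}$; hence every comb of such an execution is a comb of $\mathcal{M}$. The revealing condition in Definition~\ref{def:rev:comb} refers only to the execution $e_1$ (which must be an execution of the ambient machine) and to a witnessing execution $e_2$ of the \emph{specification} $\mathcal{S}$; it does not otherwise depend on whether we view $e_1$ as living in $\mathcal{P}$ or in $\mathcal{M}$. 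Therefore any $\alpha$-revealing comb of $\mathcal{P}$ is an $\alpha$-revealing comb of $\mathcal{M}$, and it is deterministic by hypothesis, so it belongs to $Rev_\alpha(\mathcal{M})$. Taking the union over all $\mathcal{P}$ gives $\bigcup_{\mathcal{P}} Rev_\alpha(\mathcal{P}) \subseteq Rev_\alpha(\mathcal{M})$.

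\medskip

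\textbf{The $\subseteq$ direction.} Let $\pi=t_1t_2\cdots t_n \in Rev_\alpha(\mathcal{M})$, so $\pi$ is a deterministic $\alpha$-revealing comb of $\mathcal{M}$, witnessed by an execution $e_1$ of $\mathcal{M}$ from $(m_0,0)$. The goal is to build a mutant $\mathcal{P}\in\Mut(\mathcal{M})$ with $\pi\in Rev_\alpha(\mathcal{P})$. The idea is to \emph{extend} the set of transitions appearing in $\pi$ to a full deterministic submachine: because $\pi$ is deterministic, for each state $s$ of $\mathcal{M}$ the transitions of $\pi$ use at most one input/output transition per pair $(s,i)$ and at most one timeout transition in $s$; for every pair $(s,i)$ or every state $s$ not constrained by $\pi$, pick \emph{any} available transition of $\mathcal{M}$ (possible since $\mathcal{M}$ is complete and defines at least one timeout transition per state). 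This yields a deterministic, complete submachine $\mathcal{P}$ of $\mathcal{M}$ whose transition set contains $\{t_1,\dots,t_n\}$. One then checks that $e_1$ is an execution of $\mathcal{P}$: every step of $e_1$ is permitted by the corresponding $t_i\in\pi\subseteq\mathcal{P}$, and the structural conditions on executions (first step not input/output, input/output steps preceded by time-elapsing steps) are properties of $e_1$ alone, already satisfied. Hence $\pi$ is a comb of an execution of $\mathcal{P}$, and since the revealing witness $e_2$ and the inequality $out(e_1)\neq out(e_2)$ are unchanged, $\pi$ is $\alpha$-revealing for $\mathcal{P}$; it is deterministic, so $\pi\in Rev_\alpha(\mathcal{P})$. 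If $\mathcal{P}=\mathcal{S}$, then $\mathcal{P}$ has a revealing comb, contradicting the remark that the specification contains no revealing comb; so $\mathcal{P}\neq\mathcal{S}$, i.e.\ $\mathcal{P}\in\Mut(\mathcal{M})$, and $\pi\in\bigcup_{\mathcal{P}'}Rev_\alpha(\mathcal{P}')$.

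\medskip

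\textbf{Main obstacle.} The delicate point is the $\subseteq$ direction: verifying that the completion $\mathcal{P}$ can always be chosen so that (i) it is a genuine deterministic submachine of $\mathcal{M}$ and (ii) $e_1$ remains a valid execution of $\mathcal{P}$. Point (i) hinges precisely on $\pi$ being \emph{deterministic} — without this, $\pi$ might force two conflicting transitions in the same state, and no mutant would contain it (this is exactly the content of the earlier observation that nondeterministic combs "belong only to nondeterministic submachines"). Point (ii) requires a small argument that time-elapsing steps of $e_1$ are still permitted after we fix the (unique) timeout transition in each source state: if $t_i$ is a timeout transition of $\pi$ permitting a time-elapsing or timeout step, then choosing that $t_i$ as \emph{the} timeout transition of the relevant state in $\mathcal{P}$ keeps the step permitted; and where $\pi$ leaves a state's timeout unconstrained, no step of $e_1$ in that state needs a timeout transition other than whatever we pick. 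I would spell this out by walking through the three step types in the execution definition. Everything else is bookkeeping.
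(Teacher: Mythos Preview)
Your proof is correct and follows the same two-inclusion strategy as the paper's own argument. The paper's proof is a terse two-sentence sketch (essentially: deterministic revealing combs of $\mathcal{M}$ sit in some mutant; combs in mutants are deterministic submachine combs of $\mathcal{M}$), whereas you spell out the construction of the containing mutant for the $\subseteq$ direction, verify that the execution $e_1$ survives in it, and explicitly rule out $\mathcal{P}=\mathcal{S}$ using the earlier remark that the specification has no revealing comb --- all points the paper leaves implicit.
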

\begin{proof}
 Mutants are deterministic submachines of the mutation machine. Each deterministic revealing comb in $\mathcal{M}$  is a comb in a mutant, meaning that $Rev_\alpha(\mathcal{M}) \subseteq \bigcup_{\mathcal{P} \in \Mut(\mathcal{M})} Rev_\alpha(\mathcal{P})$. Combs in mutants are necessarily deterministic and revealing combs in mutants are deterministic revealing comb in the mutation machine because mutants are deterministic submachines of the mutation machine. Thus $\bigcup_{\mathcal{P} \in \Mut(\mathcal{M})} Rev_\alpha(\mathcal{P}) \subseteq  Rev_\alpha(\mathcal{M}) $
\end{proof}

\begin{lemma}\label{lemma:rev:detected}
Let $\pi \in Rev_\alpha(\mathcal{M})$ for a test $\alpha$ and $\mathcal{P} \in \Mut(\mathcal{M})$. 
$\mathcal{P}$ is not detected by  $\alpha$ if and only if $\pi \not\in Rev_\alpha(\mathcal{P})$.   
\end{lemma}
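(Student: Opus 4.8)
The plan is to unfold the definition of $Rev_\alpha(\mathcal{P})$ on one side and the definition of ``$\alpha$ detects $\mathcal{P}$'' on the other, exploiting that $\mathcal{P}$ and $\mathcal{S}$ are deterministic and complete: on any timed input sequence each of them has exactly one execution from its initial state, hence a single timed output sequence and a single comb --- uniqueness holds even for the timeout transitions that permit time-elapsing steps, since each state of a deterministic machine carries a single timeout transition, whereas in the nondeterministic $\mathcal{M}$ several such transitions may permit the same time-elapsing step and produce several combs. I would also record the elementary fact that a deterministic comb $\pi$ belongs to $Rev_\alpha(\mathcal{P})$ if and only if every transition occurring in $\pi$ is a transition of $\mathcal{P}$: ``only if'' is immediate, since a comb is a sequence of transitions taken from the execution it is the comb of; ``if'' holds because, $\mathcal{P}$ being deterministic, owning those transitions forces $\mathcal{P}$ to replay exactly the execution $e_1$ of which $\pi$ is a comb.

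First I would prove the forward implication, contrapositively: if $\pi \in Rev_\alpha(\mathcal{P})$ then $\mathcal{P}$ is detected by $\alpha$. By Definition~\ref{def:rev:comb}, $\pi$ is a comb of an execution $e_1$ of $\mathcal{P}$ from $(p_0,0)$ with $inp(e_1)=\alpha$, and there is an execution $e_2$ of $\mathcal{S}$ with $inp(e_2)=\alpha$ and $out(e_1)\neq out(e_2)$. Hence $out(e_1)\in out_\mathcal{P}((p_0,0),\alpha)$ and $out(e_2)\in out_\mathcal{S}((s_0,0),\alpha)$; since both machines are deterministic and complete these two sets are singletons, so they are distinct, and therefore $\alpha$ detects $\mathcal{P}$ by Definition~\ref{def:conforming}. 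Equivalently, if $\mathcal{P}$ is not detected by $\alpha$ then $\pi \notin Rev_\alpha(\mathcal{P})$.

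For the converse, again contrapositively, I would show that if $\mathcal{P}$ is detected by $\alpha$ then $\pi \in Rev_\alpha(\mathcal{P})$. Determinism and completeness give unique executions $e$ of $\mathcal{P}$ and $\hat e$ of $\mathcal{S}$ on $\alpha$, and detection means $out(e)\neq out(\hat e)$; cutting $\alpha$ at the shortest prefix along which these outputs already disagree yields a deterministic comb $\pi_\mathcal{P}$ that is $\alpha$-revealing for $\mathcal{P}$, hence $\alpha$-revealing for $\mathcal{M}$ by the submachine relation, hence an element of $Rev_\alpha(\mathcal{M})$ by Lemma~\ref{lemma:comb:equal}. It then remains to identify $\pi_\mathcal{P}$ with the prescribed comb $\pi$; by the elementary fact above this amounts to showing that every input/output and timeout transition occurring in $\pi$ is a transition of $\mathcal{P}$. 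This passage --- from ``$\mathcal{P}$ is detected by $\alpha$'' to ``$\mathcal{P}$ owns all the transitions of $\pi$'' --- is where I expect the real difficulty, and it is where the minimality clause ``this does not hold for any prefix of $\alpha$'' of Definition~\ref{def:rev:comb} has to be invoked: it forces $\mathcal{P}$'s run on $\alpha$ to coincide with $\mathcal{S}$'s up to the last input, so the unique point at which $\mathcal{P}$ can deviate lies precisely along the transitions of $\pi$. The two technical nuisances I anticipate are (a) the time-continuity convention and the ``zero-delay between a timeout step and an input step'' convention, which must be handled so that ``$\mathcal{P}$ owns every transition of $\pi$'' genuinely lets $\mathcal{P}$ replay $e_1$ step by step; and (b) the bookkeeping of the timeout-unexpired transitions that inflate a path into a comb, which also have to be matched between $\mathcal{P}$'s run and $\pi$.
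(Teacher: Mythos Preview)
Your forward direction is correct and essentially matches the paper's argument.

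The converse direction, however, has a genuine gap, and the difficulty you flag (``identify $\pi_\mathcal{P}$ with the prescribed comb $\pi$'') is not a technical nuisance but an impossibility: the implication ``$\mathcal{P}$ is detected by $\alpha$ $\Rightarrow$ $\pi\in Rev_\alpha(\mathcal{P})$'' is \emph{false} for a fixed $\pi$. Take the paper's own Example~\ref{example:revcomb}: for $\alpha=(b,0.5)(a,1)(b,6.7)(a,7.2)$ the set $Rev_\alpha(\mathcal{M})$ contains both $\pi = t_3 t_2 t_6 t_5 t_{17} t_8 t_{12} t_{10}$ and $\pi' = t_{16} t_2 t_6 t_5 t_{17} t_8 t_{12} t_{10}$. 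A mutant $\mathcal{P}$ that uses $t_{16}$ rather than $t_3$ is detected by $\alpha$ (via $\pi'$), yet $\pi\notin Rev_\alpha(\mathcal{P})$ because $t_3$ is simply not a transition of $\mathcal{P}$. Your appeal to the minimality clause of Definition~\ref{def:rev:comb} does not help: minimality only says the \emph{outputs} of $\mathcal{P}$ and $\mathcal{S}$ agree on every proper prefix of $\alpha$, not that the \emph{transitions} (or even the \emph{states}) coincide, and it says nothing at all relating $\mathcal{P}$'s run to the transitions of the externally given comb $\pi$.

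What is going on is that the lemma is stated loosely; the paper's own proof actually establishes the statement with the quantifier placed correctly: $\mathcal{P}$ is not detected by $\alpha$ if and only if $\pi\notin Rev_\alpha(\mathcal{P})$ \emph{for every} $\pi\in Rev_\alpha(\mathcal{M})$, i.e., $Rev_\alpha(\mathcal{P})\cap Rev_\alpha(\mathcal{M})=\emptyset$ (equivalently, by Lemma~\ref{lemma:comb:equal}, $Rev_\alpha(\mathcal{P})=\emptyset$). With that reading the converse is immediate: if $\mathcal{P}$ is detected by $\alpha$ then the unique execution of $\mathcal{P}$ on $\alpha$ yields a deterministic $\alpha$-revealing comb $\pi_\mathcal{P}\in Rev_\alpha(\mathcal{P})$, so the intersection is nonempty. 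No identification of $\pi_\mathcal{P}$ with a prescribed $\pi$ is needed, and the ``technical nuisances'' you anticipate do not arise.
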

\begin{proof}
 Assume that  $\mathcal{P} \in \Mut(\mathcal{M})$ is not detected by $\alpha$. Then $\mathcal{P}$ is either conforming or nonconforming. $\mathcal{P}$ contains no revealing comb if it is conforming; then $\pi \not\in Rev_\alpha(\mathcal{P})$ for every $\pi \in Rev_\alpha(\mathcal{M})$. If $\mathcal{P}$ is nonconforming and there is $\pi \in Rev_\alpha(\mathcal{P}) \cap  Rev_\alpha(\mathcal{M})$, we get a contradiction with the fact that  $\mathcal{P}$ is not detected by $\alpha$ because  $\pi$ is the comb for  an execution in $\mathcal{P}$ with timed input sequence $\alpha$ and an unexpected timed output sequence. Conversely, by Definition~\ref{def:conforming} and Definition~\ref{def:rev:comb} if $Rev_\alpha(\mathcal{P}) \cap  Rev_\alpha(\mathcal{M}) =\emptyset$ then $\alpha$ does not detect $\mathcal{P}$
\end{proof}

\begin{example}
The comb for the fourth execution in Example~\ref{ex:example1}, namely $t_{16}t_{12}t_{11}t_{12}t_{13}t_{6}t_{4}$ is revealing and contained in the mutant and mutation machine in Figure~\ref{fig:tfsm}. To prevent the fourth execution, we must prevent one the transitions in the comb. For example, if we prevent $t_{16}$, the other timeout transition $t_3$ defined in $s_1$ will be performed, yielding to the third execution which cannot be performed in the mutant $\mathcal{P}_1$, but rather in  $\mathcal{S}_1$. Clearly $\mathcal{S}_1$ is not detected by $(b,3.5)(a,4.5)(a,17)$, in the contrary of $\mathcal{P}_1$. 
\end{example} 

We define a distinguishing automaton with timeouts for the specification and the mutation machine; the automaton  define all revealing combs from which we will serve to extract deterministic combs which reveal nonconforming mutants.

\begin{definition}[Distinguishing automaton with timeouts]
\label{def:da}
Given a specification machine $\mathcal{S} = (S, s_0, I, O, \lambda_\mathcal{S}, \Delta_\mathcal{S})$ and a mutation machine $\mathcal{M} = (M, m_0, I, O, \lambda_\mathcal{M}, \Delta_\mathcal{M})$, a finite automaton $\mathcal{D} = (C \cup \{\nabla\}, c_0, I, \lambda_\mathcal{D}, \Delta_\mathcal{D}, \nabla)$, where $C \subseteq S \times S \times (\mathbb{N}_{\geq 0} \cup \{\infty\}) \times (\mathbb{N}_{\geq 0} \cup \{\infty\})$, 
$\lambda_\mathcal{D}\subseteq C \times I\times C$ is the input transition  relation, $\Delta_\mathcal{D}\subseteq C\times(\mathbb{N}_{\geq 1} \cup \{\infty\}) \times C$ is the timeout transition relation and $\nabla$ is the accepting (sink) state, is the {\em distinguishing} automaton with timeouts for $\mathcal{S}$ and $\mathcal{M}$, if it holds that:
\begin{itemize}
\setlength\itemsep{0.5em}
	\item $c_0 = (s_0, m_0, 0, 0)$
    \item For each $(s, m, x_s, x_m) \in C$ and $i \in I$
    \begin{itemize}
    \setlength\itemsep{0.5em}
    	\item[$(\mathcal{R}_1)$]: ${((s, m, x_s, x_m), i, (s', m', 0, 0)) \in \lambda_\mathcal{D}}$ if there exists $(s, i, o, s') \in \lambda_\mathcal{S}, (m, i, o', m') \in \lambda_\mathcal{M}$ s.t.  $o = o'$
    	\item[$(\mathcal{R}_2)$]: ${((s, m, x_s, x_m), i, \nabla) \in \lambda_\mathcal{D}}$ if there exists $(s, i, o, s') \in \lambda_\mathcal{S}$, $(m, i, o', m') \in \lambda_\mathcal{M}$ s.t. $o \neq o'$
    \end{itemize}
    \item For each $(s, m, x_s, x_m) \in C$ and the only timeout transition $(s, \delta_s, s') \in \Delta_\mathcal{S}$ defined in the state of the deterministic specification
    \begin{itemize}
    \setlength\itemsep{0.5em}
    	\item[$(\mathcal{R}_3)$]: $((s, m, x_s, x_m), \delta_m  -  x_m, (s', m', 0, 0)) \in \Delta_\mathcal{D}$  if there exists  $(m, \delta_m, m') \in \Delta_\mathcal{M}$  s.t. $\delta_s - x_s = \delta_m  -  x_m$  and $\delta_m  -  x_m > 0$
        \item[$(\mathcal{R}_4)$]: $((s, m, x_s, x_m), \delta_m - x_m, (s, m', x_s + \delta_m - x_m, 0)) \in \Delta_\mathcal{D}$  if there exists $(m, \delta_m, m') \in \Delta_\mathcal{M}$  s.t. $ \delta_m  -  x_m < \delta_s - x_s$  and  $\delta_s \neq \infty$  and $\delta_m  -  x_m > 0$
        \item[$(\mathcal{R}_5)$]: $((s, m, x_s, x_m), \delta_m - x_m, (s, m', \infty, 0)) \in \Delta_\mathcal{D}$  if there exists $(m, \delta_m, m') \in \Delta_\mathcal{M}$  s.t.  $\delta_m  -  x_m < \delta_s - x_s$  and $\delta_s = \infty$  and $\delta_m  -  x_m > 0$
        \item[$(\mathcal{R}_6)$]: $((s, m, x_s, x_m), \delta_s - x_s, (s', m, 0, x_m + \delta_s - x_s)) \in \Delta_\mathcal{D}$ if there exists $(m, \delta_m, m') \in \Delta_\mathcal{M}$  s.t. $\delta_s - x_s < \delta_m  -  x_m$ and $\delta_m \neq \infty $ and $\delta_s - x_s > 0$
    \item[$(\mathcal{R}_7)$]: $((s, m, x_s, x_m), \delta_s - x_s, (s', m, 0, \infty)) \in \Delta_\mathcal{D}$  if there exists $ (m, \delta_m, m') \in \Delta_\mathcal{M}$ s.t. $\delta_s - x_s < \delta_m  -  x_m$  and $\delta_m = \infty$ and $\delta_s - x_s > 0$,\\
    where $\infty - x = \infty$ if $x$ is finite or infinite and $\infty + \infty  = \infty$.
    \end{itemize}
\item $(\nabla, x, \nabla) \in \lambda_\mathcal{D}$ for all $x \in I$   and $(\nabla, \infty, \nabla) \in \Delta_\mathcal{D}$
\end{itemize}
\end{definition}

\par The seven rules $\{\mathcal{R}_i\}_{i=1..7}$ introduce  input transitions and timeout transitions in $\mathcal{D}$. Each state $(s, m, x_s, x_m)$ of $\mathcal{D}$  is composed of a state $s$ of the specification, a state $m$ of the mutation machine, the value $x_s$ of the clock of  the specification  and the value $x_m$ of the clock of the mutation machine. The clock values are needed  for selecting timeout transitions. 
Input transitions are introduced with $\mathcal{R}_1$ and $\mathcal{R}_2$.
According to $\mathcal{R}_2$, there is a transition to accepting state $\nabla$ if different outputs are produced in $s$ and $m$ for the same input; otherwise the specification and the mutation machine move to next states, as described with rule $\mathcal{R}_1$.\\ 
Timeout transitions of the form $((s,m,x_s,x_m),\delta,(s',m', x_s',x_m'))$ are introduced by $\{\mathcal{R}_i\}_{i=3..7}$. The value of $\delta$ is the
delay  for reaching the only timeout $\delta_s$ defined in $s$ from  $(s,x_s)$ or a timeout defined in $m$ from $(m, x_m)$, since multiple timeouts can be defined in states of  nondeterministic mutation machines.   $\delta$ can be greater than the delays   for reaching  some timeouts defined in $m$; however, $\delta$  is never greater than $\delta_s - x_s$, the delay for reaching  the only timeout $\delta_s$  in $s$. 
So, $x_s'= 0$ if $\delta = \delta_s - x_s$; a similar statement holds for the clock and a selected timeout transition of the mutation machine.
In  $\mathcal{R}_3$, both the timeout in $s$ and a timeout in $m$ expire after $\delta$ time units.
In $\mathcal{R}_4$ only a timeout defined in  $m$ expires after $\delta$ time units and the only finite timeout defined in $s$  does not expire after  $\delta$ time units. A similar phenomenon is described with $\mathcal{R}_5$; but contrarily to $\mathcal{R}_4$,  the only timeout in $s$ is $\infty$ and  we set the clock of the specification  to $\infty$. Setting the clock to $\infty$ expresses the fact that we do not care any more about finite values of $x_s'$ because only the infinite timeout in $s$ must be reached. Without this latter abstraction on the values of $x_s'$, the size of $C$ could be infinite because we could have to apply $\mathcal{R}_4$   infinitely.   
The rules $\mathcal{R}_6$ and $\mathcal{R}_7$ are similar to $\mathcal{R}_4$ and $\mathcal{R}_5$, except that the only timeout  in $s$ expires before a timeout in $m$.
\par Each transition introduced in $\mathcal{D}$ with $\mathcal{R}_1$, $\mathcal{R}_2$, and $\mathcal{R}_3$, is defined by a transition of the specification and a transition of the mutation machine. A transition introduced with  $\mathcal{R}_4$ is defined by a timeout transition of the mutation machine  and the only timeout transition of the specification in $s$.
Every comb of $\mathcal{D}$ is defined by   a  comb of the specification and a comb of the mutation machine, i.e., it has been obtained by composing the transitions in a comb of the specification with the transitions in a comb of the mutation machine. 

\par An execution of $\mathcal{D}$ from a timed state $(c,x)$ is a sequence of steps between  timed states of $\mathcal{D}$; it can be defined similarly to that for a TFSM-T. An execution starting from $(c_0,0)$ and ending at $\nabla$ is called {\em accepted}. As for TFSMs-T, we can associate every execution of $\mathcal{D}$ with a comb and a timed input sequence. A comb of $\mathcal{D}$ is accepted if it corresponds to an accepted execution.

\begin{lemma}
A  comb $\pi$ of $\mathcal{M}$ is revealing if  it defines an accepted comb of $\mathcal{D}$. 
\end{lemma}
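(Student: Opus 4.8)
The plan is to establish a correspondence between combs of $\mathcal{M}$ that are $\alpha$-revealing and accepted combs of $\mathcal{D}$, exploiting the product construction. Fix a comb $\pi = t_1 t_2 \cdots t_n$ of $\mathcal{M}$ and suppose it defines an accepted comb $\hat\pi = \tau_1 \tau_2 \cdots \tau_n$ of $\mathcal{D}$, where each $\tau_j$ is built (via one of $\mathcal{R}_1,\dots,\mathcal{R}_7$) from $t_j$ together with a matching transition of the specification $\mathcal{S}$. First I would unpack what "defines an accepted comb" means: the $\mathcal{D}$-comb starts at $c_0 = (s_0,m_0,0,0)$, ends at $\nabla$, and its non-$\mathcal{M}$ components trace out a comb of $\mathcal{S}$ synchronized step-by-step with $\pi$ on the inputs and the elapsed delays. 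Since the $\mathcal{D}$-comb reaches $\nabla$, the last input step must have been introduced by $\mathcal{R}_2$, i.e. at the corresponding timed state $(s,m,x_s,x_m)$ the specification and the mutation machine produce different outputs $o \neq o'$ on the same input, while rule $\mathcal{R}_1$ (outputs agree) governs all earlier input steps along the accepted portion.

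Next I would turn this synchronized pair of combs into the two executions required by Definition~\ref{def:rev:comb}. From $\pi$ and the common timed input sequence $\alpha$ read off the $\mathcal{D}$-comb, reconstruct an execution $e_1$ of $\mathcal{M}$ permitted by $\pi$ with $inp(e_1)=\alpha$; simultaneously the $\mathcal{S}$-component of $\hat\pi$ yields an execution $e_2$ of $\mathcal{S}$ with $inp(e_2)=\alpha$ — here I would invoke the fact, stated just before the lemma, that every comb of $\mathcal{D}$ is obtained by composing a comb of $\mathcal{S}$ with a comb of $\mathcal{M}$, and that the timeout rules $\mathcal{R}_3$–$\mathcal{R}_7$ are exactly designed so that the delay $\delta$ of each $\mathcal{D}$-timeout step is realizable both as a (partial) timeout step of $\mathcal{S}$ and of $\mathcal{M}$, with the clock-coordinates $x_s,x_m$ faithfully tracking the two clocks. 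Because $\mathcal{R}_1$ is used at every non-final input step, the outputs of $e_1$ and $e_2$ coincide on every proper prefix; because the final input step uses $\mathcal{R}_2$, the outputs differ at that step, so $out(e_1) \neq out(e_2)$ while equality holds on every prefix of $\alpha$. That is precisely the condition making $\pi$ an $\alpha$-revealing comb, hence revealing.

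I expect the main obstacle to be the bookkeeping around the clock coordinates and the time-continuity of time-elapsing steps: one must check that a single $\mathcal{D}$-timeout transition of delay $\delta$ genuinely corresponds, on the $\mathcal{S}$-side and the $\mathcal{M}$-side, to a (possibly split) sequence of time-elapsing steps followed by a timeout step, and that the $\infty$-abstraction used in $\mathcal{R}_5$ and $\mathcal{R}_7$ does not lose any behavior — i.e. that setting $x_s'$ or $x_m'$ to $\infty$ still permits exactly the executions in which only the remaining infinite timeout is relevant. Here I would lean on the time-continuity property stated in the preliminaries (a run of time-elapsing steps permitted by one timeout transition collapses to a single time-elapsing step, and conversely can be refined) to move freely between the "coarse" $\mathcal{D}$-steps and the "fine" steps of $\mathcal{S}$ and $\mathcal{M}$. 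A secondary subtlety is the requirement in Definition~\ref{def:rev:comb} that the output discrepancy not already appear on a proper prefix of $\alpha$; this is handled by taking the accepted $\mathcal{D}$-comb to be minimal, i.e. $\nabla$ is entered only at the very last step, which is automatic since once $\nabla$ is reached the comb is complete.

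Finally, I would remark that this lemma gives only the "if" direction stated; the converse (every deterministic revealing comb of $\mathcal{M}$ defines an accepted comb of $\mathcal{D}$) would be proved symmetrically by running $e_1$ and its witnessing $\mathcal{S}$-execution $e_2$ in lockstep and checking that at each step one of the rules $\mathcal{R}_1,\dots,\mathcal{R}_7$ fires, with $\mathcal{R}_2$ firing exactly at the step where the outputs first diverge — but only the direction asserted is needed here, so I would not develop it further.
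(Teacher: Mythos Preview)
Your proposal is correct and follows essentially the same approach as the paper: decompose the accepted $\mathcal{D}$-comb into the given $\mathcal{M}$-comb $\pi$ and a synchronized $\mathcal{S}$-comb via the rules $\mathcal{R}_1$--$\mathcal{R}_7$, then observe that acceptance (entering $\nabla$ through $\mathcal{R}_2$) forces the corresponding executions to share the timed input sequence while differing in output at the last step. In fact your outline is considerably more detailed than the paper's own proof, which is a three-sentence sketch that omits the clock bookkeeping, the time-continuity argument, and the prefix-minimality point you rightly identify.
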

\begin{proof}
Let $\pi'$ be an accepted comb of $\mathcal{D}$ defined by a deterministic comb $\pi$ of  $\mathcal{M}$.  $\pi'$ includes the sink state and it was obtained by composing the transitions in $\pi$ with transitions in the specification with the rules in Definition~\ref{def:da}. $\pi$ corresponds to an execution of the mutation machine producing an unexpected timed output sequence, because otherwise $\pi'$ would not have been accepted. Consequently $\pi$ is revealing.
\end{proof}

A revealing comb  can be common to many mutants, in which case those mutants are said to be {\em involved} in the comb.  Mutants involved in a revealing comb are detected by the tests enabling the comb. We let $Susp_{X}$ denote the set of suspicious transitions  in  $X$.
\begin{lemma}\label{lemma:involved}
A mutant $\mathcal{P}$ is involved in a revealing comb $\pi$ of $\mathcal{M}$ if and only if  $Susp_\pi\subseteq Susp_\mathcal{P}$. 
 \end{lemma}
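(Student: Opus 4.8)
The plan is to prove both directions by unfolding the definitions of "involved in a comb" and of a mutant as a deterministic submachine of $\mathcal{M}$, keeping careful track of what the suspicious transitions $Susp_\pi$ and $Susp_\mathcal{P}$ are. Recall that a mutant $\mathcal{P}$ is involved in a revealing comb $\pi$ of $\mathcal{M}$ when $\pi$ is (the comb of) an execution of $\mathcal{P}$ — equivalently, when every transition occurring in $\pi$ is a transition of $\mathcal{P}$. For the ($\Leftarrow$) direction, assume $Susp_\pi \subseteq Susp_\mathcal{P}$. I would split the transitions appearing in $\pi$ into the suspicious ones and the non-suspicious (trusted) ones. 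By definition every trusted transition of $\mathcal{M}$ belongs to the specification $\mathcal{S}$, hence to every mutant, since $\mathcal{S}$ is a submachine of every mutant's "skeleton" — more precisely, a non-suspicious transition is forced: it is the unique input/output transition for its $(s,i)$ pair, or the unique timeout transition in its state, so it must be present in any deterministic complete submachine. The suspicious transitions of $\pi$ lie in $Susp_\pi \subseteq Susp_\mathcal{P} \subseteq \lambda_\mathcal{P}\cup\Delta_\mathcal{P}$ by hypothesis. Hence every transition of $\pi$ is in $\mathcal{P}$, so $\pi$ is a comb of $\mathcal{P}$; since $\pi$ is revealing for $\mathcal{M}$ it witnesses an execution with an unexpected timed output sequence in $\mathcal{P}$, so $\mathcal{P}$ is involved in $\pi$.

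For the ($\Rightarrow$) direction, assume $\mathcal{P}$ is involved in $\pi$. Then $\pi$ is a comb of an execution of $\mathcal{P}$, so every transition occurring in $\pi$ — in particular every transition in $Susp_\pi$ — is a transition of $\mathcal{P}$. A transition of $\pi$ that is suspicious in $\mathcal{M}$ remains suspicious when viewed inside $\mathcal{P}$? No — here is the subtlety: "suspicious" is a property relative to the machine. $Susp_\pi$ should be read as the set of transitions occurring in $\pi$ that are suspicious in $\mathcal{M}$, i.e. $Susp_\pi = Susp_{\mathcal{M}} \cap \{\text{transitions of }\pi\}$, and $Susp_\mathcal{P} = Susp_{\mathcal{M}} \cap (\lambda_\mathcal{P}\cup\Delta_\mathcal{P})$ (a mutant by itself has no suspicious transitions, being deterministic). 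With that reading, since each transition of $\pi$ that is suspicious in $\mathcal{M}$ is a transition of $\mathcal{P}$, it lies in $Susp_{\mathcal{M}}\cap(\lambda_\mathcal{P}\cup\Delta_\mathcal{P}) = Susp_\mathcal{P}$; hence $Susp_\pi\subseteq Susp_\mathcal{P}$.

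The main obstacle — and the step I would spell out most carefully — is pinning down exactly what $Susp_X$ means for $X=\pi$ versus $X=\mathcal{P}$, because a naive reading ("transitions suspicious in $X$") makes $Susp_\mathcal{P}=\emptyset$ and the lemma false. The intended meaning, consistent with the earlier text ("The set of suspicious transitions of $\mathcal{M}$ is partitioned into ... untrusted ... and ... mutated"), is that suspiciousness is always computed in $\mathcal{M}$, and $Susp_X$ is the restriction of $Susp_{\mathcal{M}}$ to the transitions that occur in $X$ (for a comb) or that belong to $X$ (for a mutant or the specification). Once this is fixed, the argument reduces to the two inclusions above plus the observation that trusted transitions of $\mathcal{M}$ are automatically present in every mutant, which follows from determinism and completeness of mutants together with the fact that a trusted transition is the only transition available for its $(s,i)$ slot (input transition) or its state (timeout transition). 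I would state this observation as a short standalone remark and then give the two-line arguments for each direction.
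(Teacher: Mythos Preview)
Your proposal is correct and follows essentially the same approach as the paper: both reduce ``involved in $\pi$'' to ``contains every transition of $\pi$'', observe that trusted transitions lie in every mutant, and conclude that involvement is equivalent to the inclusion $Susp_\pi \subseteq Susp_\mathcal{P}$. Your version is simply more explicit---in particular your careful disambiguation of $Susp_X$ as ``transitions of $X$ that are suspicious in $\mathcal{M}$'' is a useful clarification that the paper leaves implicit.
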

 \begin{proof}
 Any mutant which does not contain a transition in a comb is not able to perform any execution defining the comb. Mutants contain all trusted transitions but a selected subset of suspicious transitions specified in the mutation machine. Consequently a mutant is involved in a revealing comb  if and only if  the suspicious transitions in the comb are included in the suspicious transitions contained in the mutant.
 \end{proof}
 
 Lemma~\ref{lemma:involved} assumes that mutants are known and indicates how to check if a mutant is involved in a given (deterministic or nondeterministic) revealing comb. However, we want to avoid the enumeration of the mutants in eliminating the nonconforming mutants detected by a test; we also want to generate tests corresponding to deterministic revealing combs because they detect nonconforming mutants as stated in Lemma~\ref{lemma:rev:detected}. So we will focus on extracting only deterministic revealing combs of $\mathcal{M}$ from $\mathcal{D}$. We can obtain  deterministic revealing combs of $\mathcal{M}$
by performing a Breadth-first search of  sink state $\nabla$ in $\mathcal{D}$ while passing through transitions of $\mathcal{D}$ defined with transitions  of $\mathcal{M}$ which cannot be defined in an identical mutant.
 To compute $Rev_{\alpha}(\mathcal{M})$ the set of deterministic $\alpha$-revealing combs, we also apply Breath-first search of the sink state in the distinguishing automaton $\mathcal{D}$. This time the search is step-wise and guided by timed inputs in $\alpha$; 
it consists to pass a timeout transition in $\mathcal{D}$ whenever the delay between the current and the previous input in $\alpha$ is greater than the timeout of the transition or to pass an input transition in $\mathcal{D}$ when the current state in $\mathcal{D}$ defines a timeout smaller than delay between the current and the previous input. The result of the Breath-first search is $Rev_{\alpha}(\mathcal{M})$.
\begin{example}\label{example:revcomb}   Figure~\ref{fig:da} presents an excerpt of the distinguishing automaton with timeouts for the $\mathcal{S}_1$ and $\mathcal{M}_1$ in Figure~\ref{fig:tfsm}. It is the relevant  fragment for extracting the revealing combs for the test $\alpha = (b, 0.5)(a, 1)(b, 6.7)(a, 7.2)$. The whole distinguishing automaton is too big to fit in this paper.  $[t] (\mathcal{R})$ indicates an input/output transition or a timeout transition $t$ of the mutation machine defining  the transition of the automaton introduced with rule $\mathcal{R}$; e.g., the timeout transition  $((s_3, s_3, 0, 0), 5, (s_2, s_3, 0, 5))$ is defined by $t_{17}$ and the timeout of $t_{17}$ has not expired when the rule $\mathcal{R}_6$ is applied. There are  six deterministic $\alpha$-revealing combs: 
     ${\boldsymbol t_3} t_2 t_6 t_5 {\boldsymbol t_{17}} {\boldsymbol t_8} t_{12} {\boldsymbol t_{10}}$,
     ${\boldsymbol t_3} t_2 t_6 t_5 {\boldsymbol t_{17}} {\boldsymbol t_8} t_{12}{\boldsymbol t_{13}}$,
     ${\boldsymbol t_3} t_2 t_6 t_5 {\boldsymbol t_{17}} {\boldsymbol t_{15}} {\boldsymbol t_{17}} {\boldsymbol t_{14}}$,
     ${\boldsymbol t_{16}} t_2 t_6 t_5 {\boldsymbol t_{17}} {\boldsymbol t_8} t_{12}{\boldsymbol t_{10}}$,
     ${\boldsymbol t_{16}} t_2 t_6 t_5 {\boldsymbol t_{17}} {\boldsymbol t_8} t_{12}{\boldsymbol t_{13}}$ and
     ${\boldsymbol t_{16}} t_2 t_6 t_5 {\boldsymbol t_{17}} {\boldsymbol t_{15}} {\boldsymbol t_{17}}{\boldsymbol t_{14}}$, where 
 the  transitions  in bold are suspicious.
 \end{example}
 
\subsection{Encoding submachines involved in revealing combs}
We introduce a Boolean variable for each suspicious transition in 
mutation machine $\mathcal{M}$; Based on Lemma~\ref{lemma:involved}, we build Boolean formulas over 
these variables to encode the mutants involved in revealing comb. 
A solution of such a formula assigns a truth value to every transition variable. We say that a solution of a formula {\em determine} a submachine $\mathcal{P}$ of $\mathcal{M}$ if $\mathcal{P}$ is composed of the trusted  transitions and the suspicious transitions of which the values of the corresponding transition variable is $True$ in the solution.
In general, the submachine for the solution of a formula can be noninitially-
connected, nondeterministic or incomplete. Later we encode  mutants (deterministic and complete submachines) with  additional 
formulas. For now, 
let us encode the submachines involved in 
revealing combs of $\mathcal{M}$ with Boolean formulas.

\par Let $\alpha$ be a test and $Rev_\alpha(\mathcal{M}) = \{\pi_1,\pi_2,\ldots,\pi_n\}$ be the set of deterministic revealing combs of $\mathcal{M}$ enabled by  $\alpha$. We encode  a comb $\pi=t_1t_2\ldots t_m$ of $\mathcal{M}$ with the Boolean formula $ \varphi_\pi =\bigwedge_{t_i \in  Susp_{\pi}} t_i$, the conjunction of all the suspicious transitions in $\pi$. Clearly, any solution of $ \varphi_\pi$ determines a submachine (of the mutation machine) containing the comb $\pi$; The executions associated with $\pi$ are defined in such a submachine which is detected by $\alpha$. Conversely, each submachine determined by a solution of the negation of  $\varphi_\pi$ does not contains $\pi$; it cannot define any execution associated with $\pi$ and is not detected by $\alpha$.  Such a submachine is not necessarily a mutant because it can be nondeterministic or incomplete. 
For the set of deterministic revealing combs in $Rev_\alpha(\mathcal{M})$, let us define the formula $ \varphi_\alpha = \bigvee_{\pi \in Rev_\alpha(\mathcal{M})} \varphi_\pi$. The set of (possibly nondeterministic or incomplete) submachines of $\mathcal{M}$  detected by $\alpha$ is determined by a solution of $\varphi_\alpha$.  A submachine of $\mathcal{M}$ surviving $\alpha$ cannot contain any comb in  $Rev_\alpha(\mathcal{M})$ and it cannot be determined by a solution of the  negation of $\varphi_\alpha$, as stated in Lemma~\ref{lemma:surv:submachine}.
\begin{lemma} \label{lemma:surv:submachine}
A submachine of $\mathcal{M}$ survives a test $\alpha$  if and only if
it can be determined by a solution of $\neg \varphi_\alpha$.
\end{lemma}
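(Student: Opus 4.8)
The plan is to factor the equivalence through a combinatorial middle statement: a submachine $\mathcal{Q}$ of $\mathcal{M}$ determined by a solution $\nu$ survives $\alpha$ \emph{iff} $\mathcal{Q}$ contains no comb of $Rev_\alpha(\mathcal{M})$ \emph{iff} $\nu \models \neg\varphi_\alpha$. I would dispatch the second equivalence first, as it is mere bookkeeping. By definition of "determine", $\mathcal{Q}$ is built from all trusted transitions of $\mathcal{M}$ together with exactly those suspicious transitions $t$ with $\nu(t)=True$. Hence for any comb $\pi$ of $\mathcal{M}$, all transitions of $\pi$ lie in $\mathcal{Q}$ iff $Susp_\pi \subseteq Susp_\mathcal{Q}$ (its non-suspicious transitions are trusted, hence automatically present), iff $\nu$ sets every variable of $Susp_\pi$ to $True$, iff $\nu \models \varphi_\pi$ — this is precisely the reasoning of Lemma~\ref{lemma:involved}, now read for an arbitrary submachine rather than a mutant. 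Quantifying over $\pi \in Rev_\alpha(\mathcal{M})$ and using $\varphi_\alpha = \bigvee_{\pi \in Rev_\alpha(\mathcal{M})}\varphi_\pi$, I get that $\mathcal{Q}$ contains some comb of $Rev_\alpha(\mathcal{M})$ iff $\nu \models \varphi_\alpha$, and taking the contrapositive, $\mathcal{Q}$ contains no such comb iff $\nu \models \neg\varphi_\alpha$.

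It then remains to prove $\mathcal{Q}$ survives $\alpha$ iff $\mathcal{Q}$ contains no comb of $Rev_\alpha(\mathcal{M})$, which I would do contrapositively in both directions. For the easy direction: if $\mathcal{Q}$ contains some $\pi \in Rev_\alpha(\mathcal{M})$, Definition~\ref{def:rev:comb} supplies an execution $e_1$ with comb $\pi$, $inp(e_1)=\alpha$, and $out(e_1)\neq out(e_2)$ for an execution $e_2$ of $\mathcal{S}$ on $\alpha$; every transition of $\pi$ lies in $\mathcal{Q}$, so $e_1$ is an execution of $\mathcal{Q}$, and since $\mathcal{S}$ is a complete deterministic TFSM-T, $out_\mathcal{S}((s_0,0),\alpha)$ is the singleton $\{out(e_2)\}$, whence $out(e_1)\in out_\mathcal{Q}((q_0,0),\alpha)\setminus out_\mathcal{S}((s_0,0),\alpha)$ and $\alpha$ detects $\mathcal{Q}$. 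Conversely, if $\alpha$ detects $\mathcal{Q}$, then some execution $e_1$ of $\mathcal{Q}$ produces on a shortest prefix of $\alpha$ an output differing from that of $\mathcal{S}$; running $e_1$ alongside the matching execution of $\mathcal{S}$ through the distinguishing automaton $\mathcal{D}$ yields an accepted execution, and choosing in each visited state the timeout transition with the largest timeout to permit all the time-elapsing sub-steps there (legitimate by the time-continuity property) produces a deterministic $\alpha$-revealing comb of $\mathcal{M}$ all of whose transitions lie in $\mathcal{Q}$ — i.e. a comb of $\mathcal{Q}$ in $Rev_\alpha(\mathcal{M})$. Lemmas~\ref{lemma:comb:equal} and~\ref{lemma:rev:detected} can be reused here to keep the deterministic/nondeterministic distinction under control and, if one prefers, to first reduce the general submachine case to the mutant case.

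The hard part will be exactly this last direction: making airtight that a detected submachine always exhibits a comb that is simultaneously (i) deterministic and (ii) $\alpha$-revealing in the precise minimal sense of Definition~\ref{def:rev:comb}, i.e. genuinely in $Rev_\alpha(\mathcal{M})$ and not merely in $Rev_{\alpha'}(\mathcal{M})$ for a proper prefix $\alpha'$ of $\alpha$, together with handling submachines that are nondeterministic or incomplete (where an execution may carry only nondeterministic combs, or fail to be extendable to full input $\alpha$). Everything else — the transition-variable bookkeeping and the unfolding of $\varphi_\pi$ and $\varphi_\alpha$ — is routine and follows the pattern already used for the comb lemmas.
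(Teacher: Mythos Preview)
The paper does not supply a formal proof of this lemma; the only argument is the informal paragraph immediately preceding the statement, which simply asserts that the submachines detected by $\alpha$ are exactly those determined by solutions of $\varphi_\alpha$. Your two-step factorisation --- survives iff contains no comb of $Rev_\alpha(\mathcal{M})$ iff satisfies $\neg\varphi_\alpha$ --- and your bookkeeping for the second equivalence faithfully expand that sketch and are correct.

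You are also right to isolate the converse of the first equivalence as the delicate point, and here you are in fact more careful than the paper. Note that ``detected'' and ``survives'' are defined in the paper only for mutants (Definition~\ref{def:conforming}); if one extends them verbatim to arbitrary submachines, the lemma as literally stated fails. An incomplete submachine $\mathcal{Q}$ may have no execution on $\alpha$ at all, so that $out_\mathcal{Q}((q_0,0),\alpha)=\emptyset\neq out_\mathcal{S}((s_0,0),\alpha)$ and $\mathcal{Q}$ is ``detected'', yet $\mathcal{Q}$ contains no comb enabled by $\alpha$, hence none in $Rev_\alpha(\mathcal{M})$, and is therefore determined by a solution of $\neg\varphi_\alpha$. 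For nondeterministic submachines the situation is similarly fragile: a revealing execution may visit the same state twice using two conflicting suspicious transitions, so all of its combs are nondeterministic and lie outside $Rev_\alpha(\mathcal{M})$. Your proposed fallback --- reduce to the mutant case via Lemmas~\ref{lemma:comb:equal} and~\ref{lemma:rev:detected} --- is exactly the right move, since for mutants those lemmas close the argument cleanly, and it is all the paper actually needs: Theorem~\ref{theo:surv:mutant} immediately conjoins with $\varphi_\mathcal{M}$. So your plan is sound; just be explicit that what you establish is the mutant version, which suffices for everything downstream, rather than the general-submachine statement as written.
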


To obtain the mutants surviving $\alpha$, we remove from the solutions of $\neg \varphi_\alpha$ those determining nondeterministic or incomplete submachines. This is possible with  a Boolean formula encoding only the mutants in $\mathcal{M}$.

\subsection{Encoding the mutants in a mutation machine}

Let  $T= {t_1, t_2,\ldots, t_n}$ be a set of Boolean variables for all the transitions $t_i$ of $\mathcal{M}$, $i=1..n$. Let us define the Boolean formula $\xi_T$ as follows
\[
\xi_T = \bigwedge_{k=1...n-1}(\neg t_k \vee \bigwedge_{l=k+1..n} \neg t_l) \wedge \bigvee_{k=1..n} t_i 
\]
A solution of $\xi_T$ assigns $True$ to exactly one selected variable and assigns $False$ to all other variables.
Note that $\xi_T$ is a CNF-SAT~\cite{DPLL62} formula and it can be solved using an existing SAT solver~\cite{Soos2009}.

\par Let $\mathcal{M}$ be a mutation machine for the specification machine $\mathcal{S}$. Clearly, $\lambda_\mathcal{S} \subseteq \lambda_\mathcal{M}$ and $\Delta_\mathcal{S} \subseteq \Delta_\mathcal{M}$. 
A deterministic and complete submachine of $\mathcal{M}$  selects one transition in  $\lambda_\mathcal{M}(s,i)$ and one transition in $\Delta_\mathcal{M}(s)$ for every state $s$ and input $i$; it is therefore determined by a solution of $\varphi_\mathcal{M}$ defined as follows.
\[
\varphi_\mathcal{M} = \bigwedge_{(s,i) \in S\times I} \xi_{\lambda_\mathcal{M}(s,i)} \wedge \bigwedge_{s \in S} \xi_{\Delta_\mathcal{M}(s)} \wedge \bigvee_{t\in \lambda_\mathcal{S} \cup \Delta_\mathcal{S}} \neg t
\]

\par The specification cannot be determined by a solution of $\varphi_\mathcal{M}$ because  its subformula $ \bigvee_{t\in \lambda_\mathcal{S} \cup \Delta_\mathcal{S}} \neg t$  encodes the rejection of transitions of the specification. The graph composed of the transitions selected by a solution can be disconnected, in which case it does not represent any mutant; a mutant can be obtained by extracting all the selected transitions connected to the initial state. We can prove Lemma~\ref{lemma:faultdomain} based on the previous discussion.
\begin{lemma}\label{lemma:faultdomain}
A submachine of  $\mathcal{M}$ is complete and deterministic if and only if it is determined by a solution of $\varphi_\mathcal{M}$.
\end{lemma}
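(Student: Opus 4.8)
The plan is to prove the two directions of the biconditional separately, treating the formula $\varphi_\mathcal{M}$ as the conjunction of three requirements and matching each requirement to a structural property of the selected submachine.

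First I would set up notation: given a solution $\nu$ of $\varphi_\mathcal{M}$, let $\mathcal{P}_\nu$ be the submachine of $\mathcal{M}$ determined by $\nu$, i.e.\ the machine whose input/output transitions are the trusted ones together with the suspicious ones $t$ with $\nu(t)=\mathit{True}$, and similarly for timeout transitions. Recall from the construction of $\xi_T$ that a solution of $\xi_T$ picks \emph{exactly one} variable of $T$ to be true. I would then argue the forward direction ($\Leftarrow$): suppose $\nu \models \varphi_\mathcal{M}$. For each $(s,i)\in S\times I$, the conjunct $\xi_{\lambda_\mathcal{M}(s,i)}$ forces exactly one transition of $\lambda_\mathcal{M}(s,i)$ into $\mathcal{P}_\nu$; since $\mathcal{M}$ is complete, $\lambda_\mathcal{M}(s,i)\neq\emptyset$, so $\mathcal{P}_\nu$ defines exactly one input/output transition for each $(s,i)$ — this is precisely completeness together with the ``at most one input/output transition per $(s,i)$'' half of determinism. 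Likewise, for each $s\in S$ the conjunct $\xi_{\Delta_\mathcal{M}(s)}$ forces exactly one timeout transition in $s$ into $\mathcal{P}_\nu$, which is the remaining half of the definition of a DTFSM-T (exactly one timeout transition per state). Hence $\mathcal{P}_\nu$ is complete and deterministic. (The third conjunct $\bigvee_{t\in\lambda_\mathcal{S}\cup\Delta_\mathcal{S}}\neg t$ plays no role here; it only guarantees $\mathcal{P}_\nu\neq\mathcal{S}$, i.e.\ that $\mathcal{P}_\nu$ is genuinely a mutant, which is consistent with — though slightly stronger than — the statement as phrased.)

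For the converse ($\Rightarrow$): let $\mathcal{P}=(P,p_0,I,O,\lambda_\mathcal{P},\Delta_\mathcal{P})$ be a complete deterministic submachine of $\mathcal{M}$. Being a submachine, $\lambda_\mathcal{P}\subseteq\lambda_\mathcal{M}$ and $\Delta_\mathcal{P}\subseteq\Delta_\mathcal{M}$ and $P\subseteq M$; being determined by a solution means we must exhibit the truth assignment $\nu$ with $\nu(t)=\mathit{True}$ iff $t$ is a suspicious transition of $\mathcal{M}$ lying in $\mathcal{P}$, and check $\nu\models\varphi_\mathcal{M}$. Completeness of $\mathcal{P}$ gives at least one $\lambda_\mathcal{P}$-transition for each $(s,i)$, and determinism gives at most one; since trusted transitions are never co-located with another transition on the same input (they are not suspicious), exactly one member of $\lambda_\mathcal{M}(s,i)$ is selected, so $\nu\models\xi_{\lambda_\mathcal{M}(s,i)}$. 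Determinism of $\mathcal{P}$ gives exactly one timeout transition per state, and a trusted timeout transition is the unique one in its source state in $\mathcal{M}$, so again exactly one member of $\Delta_\mathcal{M}(s)$ is selected and $\nu\models\xi_{\Delta_\mathcal{M}(s)}$. The only subtle point is the third conjunct: since the statement as written does not require $\mathcal{P}\neq\mathcal{S}$, I would note that if one wants the biconditional exactly as stated one drops $\bigvee_{t\in\lambda_\mathcal{S}\cup\Delta_\mathcal{S}}\neg t$ from $\varphi_\mathcal{M}$, or else one reads ``submachine'' in the lemma as ``mutant'' (the paper's convention elsewhere), in which case $\mathcal{P}\neq\mathcal{S}$ holds by definition and hence some transition of $\mathcal{S}$ is absent, giving $\nu\models\bigvee_{t\in\lambda_\mathcal{S}\cup\Delta_\mathcal{S}}\neg t$.

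I expect the main obstacle to be purely bookkeeping rather than mathematical depth: carefully reconciling the three pieces of terminology — ``submachine'' vs.\ ``mutant'', ``suspicious'' vs.\ ``trusted'' vs.\ ``untrusted'', and the fact that variables in $T$ range over \emph{all} transitions of $\mathcal{M}$ whereas elsewhere the encoding uses only variables for suspicious transitions — and making sure the ``exactly one'' guarantee of $\xi_T$ is invoked with a nonempty index set (which is where completeness of $\mathcal{M}$ is silently used). A clean way to avoid repeatedly arguing about trusted transitions is the observation, recorded just before Definition~\ref{def:conforming} in the paper, that the suspicious transitions of $\mathcal{M}$ partition as untrusted-specification transitions plus mutated transitions, and that in each ``column'' $\lambda_\mathcal{M}(s,i)$ (resp.\ each $\Delta_\mathcal{M}(s)$) either all transitions are trusted and there is exactly one, or all are suspicious; this makes the correspondence between ``selecting exactly one transition per column'' and ``complete + deterministic'' essentially immediate.
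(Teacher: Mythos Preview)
Your proposal is correct and follows exactly the approach the paper intends: the paper gives no formal proof at all, merely stating that the lemma follows ``based on the previous discussion,'' and that discussion is precisely the observation you unpack --- that $\xi_T$ forces the selection of exactly one transition per set $\lambda_\mathcal{M}(s,i)$ and per set $\Delta_\mathcal{M}(s)$, which is equivalent to completeness plus determinism. You also correctly flag the mismatch introduced by the third conjunct $\bigvee_{t\in\lambda_\mathcal{S}\cup\Delta_\mathcal{S}}\neg t$: the lemma as literally stated fails for $\mathcal{P}=\mathcal{S}$, and the paper's surrounding text confirms your reading that ``submachine'' here is meant to be read as ``mutant.''
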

We can prove the  following theorem thanks to Lemma~\ref{lemma:faultdomain} and Lemma~\ref{lemma:surv:submachine}. 
\begin{theorem}\label{theo:surv:mutant}
A mutant survives the test $\alpha$ if it is determined 
by a solution of $ \neg \varphi_\alpha \wedge \varphi_\mathcal{M}$.
\end{theorem}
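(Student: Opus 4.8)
The plan is to chain together the two lemmas that immediately precede the theorem, using the logical equivalence each supplies. First I would fix a test $\alpha$ and a mutant $\mathcal{P}$, and observe that by definition a mutant is, in particular, a deterministic and complete submachine of $\mathcal{M}$; hence by Lemma~\ref{lemma:faultdomain} the mutant $\mathcal{P}$ is determined by a solution of $\varphi_\mathcal{M}$. Conversely, a solution of $\varphi_\mathcal{M}$ determines a deterministic and complete submachine, and extracting the part reachable from the initial state yields a mutant (or the submachine is already a mutant); so solutions of $\varphi_\mathcal{M}$ correspond exactly to mutants in $\Mut(\mathcal{M})$.

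Next I would invoke Lemma~\ref{lemma:surv:submachine}: a submachine of $\mathcal{M}$ survives $\alpha$ if and only if it is determined by a solution of $\neg\varphi_\alpha$. Applying this to the particular submachine $\mathcal{P}$, which we already know is determined by a solution of $\varphi_\mathcal{M}$, the same truth assignment must satisfy $\neg\varphi_\alpha$ precisely when $\mathcal{P}$ survives $\alpha$. Combining the two, a truth assignment that satisfies $\neg\varphi_\alpha\wedge\varphi_\mathcal{M}$ simultaneously (i) determines a deterministic complete submachine — a mutant — and (ii) certifies that this mutant contains no comb in $Rev_\alpha(\mathcal{M})$, hence survives $\alpha$. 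This gives the stated implication: if a mutant is determined by a solution of $\neg\varphi_\alpha\wedge\varphi_\mathcal{M}$, then it survives $\alpha$. (One should note the theorem is phrased as a one-directional "if", matching the fact that the correspondence between $\varphi_\mathcal{M}$-solutions and mutants may be many-to-one once one passes to the reachable part, so I would not claim the converse here.)

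The main obstacle, such as it is, is bookkeeping rather than depth: I must be careful that the notion of "determined by a solution" is used consistently across the two lemmas — Lemma~\ref{lemma:surv:submachine} quantifies over arbitrary submachines determined by $\neg\varphi_\alpha$, while Lemma~\ref{lemma:faultdomain} pins down exactly the deterministic complete ones via $\varphi_\mathcal{M}$, and I need the conjunction of the two formulas to be interpreted over the shared set of transition variables so that a single assignment feeds both. I would also remark that $\varphi_\alpha$ is built only from suspicious-transition variables whereas $\varphi_\mathcal{M}$ ranges over all transition variables, but since every variable occurring in $\varphi_\alpha$ also occurs in $\varphi_\mathcal{M}$, a joint solution is well-defined and the submachine it determines is unambiguous. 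With these compatibility points checked, the proof is a two-line composition of Lemma~\ref{lemma:surv:submachine} and Lemma~\ref{lemma:faultdomain}.
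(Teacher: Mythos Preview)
Your proposal is correct and follows exactly the route the paper indicates: the theorem is stated to follow ``thanks to Lemma~\ref{lemma:faultdomain} and Lemma~\ref{lemma:surv:submachine}'', and you have spelled out precisely that two-line composition, with some additional care about the shared variable set that the paper leaves implicit.
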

\begin{example}
Considering the revealing combs for  $\alpha =(b, 0)(a, 0)\linebreak(b, 5)(a, 5)$, we use the suspicious transitions in the six revealing combs in Example~\ref{example:revcomb} at Page~\pageref{example:revcomb} to compute 
$\neg \varphi_{\alpha} = (\neg t_3 \vee \neg t_{17} \vee \neg t_8 \vee \neg t_{10}) \wedge (\neg t_3 \vee \neg t_{17} \vee \neg t_8 \vee \neg t_{13}) \wedge (\neg t_3 \vee \neg t_{17} \vee \neg t_{15} \vee \neg t_{14}) \wedge (\neg t_{16} \vee \neg t_{17} \vee \neg t_8 \vee \neg t_{10}) \wedge (\neg t_{16} \vee \neg t_{17} \vee \neg t_8 \vee \neg t_{13}) \wedge (\neg t_{16} \vee \neg t_{17} \vee \neg t_{15} \vee \neg t_{14})$. The mutant composed with the transitions $t_1, t_2, t_4, t_6, t_5, t_7, t_9, t_{15}, t_{13}, t_{12}, t_{11}$ and $t_{16}$ is determined by a solution of  $\neg \varphi_{\alpha}$ and it survives $\alpha$. The submachine with the transitions $t_1, t_2,t_3, t_4$ and $t_6$ is determined by another solution of  $\neg \varphi_{\alpha}$; however, it is neither a mutant nor a solution of $\varphi_{\mathcal{M}_1}$ defined as follows: \\
$\varphi_{\mathcal{M}_1} = (t_1) \wedge (t_2)  \wedge (t_5) \wedge (t_4) \wedge (\neg t_7 \vee \neg t_{14}) \wedge (t_7 \vee t_{14}) \wedge (\neg t_8 \vee \neg t_{15}) \wedge (t_8 \vee t_{15}) \wedge (\neg t_{10} \vee \neg t_{13}) \wedge (t_{10} \vee t_{13}) \wedge (t_{11}) \wedge (\neg t_3 \vee \neg t_{16}) \wedge (t_3 \vee t_{16}) \wedge (t_6) \wedge (\neg t_9 \vee \neg t_{17}) \wedge (t_9 \vee t_{17}) \wedge (t_{12}) \wedge (\neg t_1 \vee \neg t_2 \vee \neg t_4 \vee \neg t_5 \vee \neg t_7 \vee \neg t_8 \vee \neg t_{11} \vee \neg t_{10} \vee \neg t_3 \vee \neg t_9 \vee \neg t_6 \vee \neg t_{12})$.
\end{example}
The mutants surviving a test $\alpha$ can be partitioned into conforming mutants and nonconforming mutants which can only be detected with a test different from $\alpha$. Nonconforming mutants can be used to generate additional tests and upgrade the constraints. The generated test suite is complete if the solutions of the constraints determine only conforming mutants. This is the intuition of the test verification and generation methods below. The methods avoid a one-by-one enumeration of the mutants because a single test  eliminates many of them.
\section{Verifying and Generating a Complete Test Suite\label{sec:analysis:generation}}
\begin{algorithm}[!t]
\caption{Verifying the completeness of a given test suite\label{algo:checkseqgen}}
\SetKwInOut{Input}{Input~}
\SetKwInOut{Output}{Output~}
\underline{\textbf{Procedure} Verify\_completeness}  $(\varphi_{\mathit{fd}}, E, \mathcal{D})$\;
\Input{$\varphi_{\mathit{fd}}$, a Boolean expression specifying a fault domain}
\Input{$E$, a (possibly empty) set of tests}
\Input{$\mathcal{D}$, the distinguishing automaton of $\mathcal{M}$ and $\mathcal{S}$}
\Output{$\alpha \neq \varepsilon$, a test detecting a nonconforming mutant surviving $E$; $\alpha = \varepsilon$, if $E$ is a complete test suite}
{\em initialization :} 
$\varphi_{E} := \bigwedge_{\alpha \in E}\neg \varphi_\alpha $ 
$\quad \varphi_{\mathit{fd}} := \varphi_{\mathit{fd}} \wedge \varphi_{E}$ 
$\quad \varphi_\mathcal{P} :=$ False $\quad \alpha := \varepsilon$\;

\Repeat{$\alpha \neq \varepsilon$ or $\mathcal{P} = null$}{
      $\varphi_{\mathit{fd}} := \varphi_{\mathit{fd}} \wedge \neg \varphi_\mathcal{P}$\;
      $\mathcal{P} :=$ Determine\_a\_submachine($\varphi_{\mathit{fd}}$)\;
	\If{$\mathcal{P} \neq null$} {
          Build $\mathcal{D}_\mathcal{P}$, the distinguishing automaton of $\mathcal{S}$ and $\mathcal{P}$\;
          \eIf{$\mathcal{D}_\mathcal{P}$ has no sink state} {
          	$\varphi_\mathcal{P} := \bigwedge_{t\in \lambda_\mathcal{P} \cup \Delta_\mathcal{P}} t$\;
          }{
          	Set $\alpha$ to the timed input sequence of an accepted comb of the distinguishing automaton $\mathcal{D}_\mathcal{P}$\;
          }
       }
  }
\Return $(\varphi_{\mathit{fd}}, \alpha$)\;
\end{algorithm}
Let $E = \{\alpha_1,\alpha_2,\ldots,\alpha_n\}$ be a test suite and  $\langle \mathcal{S},\simeq,\Mut(\mathcal{M})\rangle$ be a fault model.
Our method for verifying whether $E$ is complete  works in three steps. First we build the Boolean expression $\bigwedge_{\alpha \in E}\neg \varphi_\alpha \wedge \varphi_\mathcal{M}$ encoding the mutants surviving $E$; this is based on Theorem~\ref{theo:surv:mutant}. 
Secondly, we use a solver to determine a mutant surviving the Boolean expression. Thirdly, we decide that $E$ is a complete test suite  if there is no mutant surviving $E$ or all the mutants surviving  the tests in $E$ are conforming. 
Procedure {\em Verify\_completeness} in Algorithm~\ref{algo:checkseqgen} implements the method. It makes a call to {\it Determine\_a\_submachine} for obtaining a mutant in a fault domain specified  with $\varphi_{\mathit{fd}}$. {\it Determine\_a\_submachine} can use an efficient SAT-solver to solve $\varphi_{\mathit{fd}}$ and build  mutants from solutions.  
{\em Determine\_a\_submachine} returns $null$ when the $\varphi_{\mathit{fd}}$ is unsatisfiable, i.e., the fault domain is empty. {\em Verify\_completeness} always terminates; this is because the size of the fault domain and the number of revealing combs for a test are  finite, and the SAT problem is decidable. 
\par Procedure {\em Generate\_complete\_test\_suite} in Algorithm~\ref{algo:generation} implements the iterative generation of a complete test suite. In each iteration step,  a new test is generated to  detect a  surviving mutant returned by {\em Verify\_completeness} if the mutant is nonconforming; otherwise the mutant is discarded  from the set of surviving mutants.  {\em Generate\_complete\_test\_suite} always terminates because there are finitely many mutants in the fault domain,  {\em Verify\_completeness} always terminates and the number of  surviving mutants   is reduced at every iteration step. 

\begin{example}
The result of an execution of  {\em Verify\_completeness} with input $E_{init} = \{ (b, 0.5)(a, 1)(b, 6.7)(a, 7.2)\}$  is the nonempty test $(a,3)$, which indicates that $E_{init}$ is not complete. 
We can generate additional tests to be added to $E$ and obtain a complete test suite.
An execution of  {\em Generate\_complete\_test\_suite} with  $E_{init}$ produces five tests detecting all the 31 mutants in the fault domain. The tests are the following: 
$(b, 0.5)(a, 1)(b, 6.7)(a, 7.2)$, 
$(a, 3)$,  
$(a, 4)(a, 8)$, 
$(b, 0)(a, 0)(b, 0)(a, 0)$ and
$(b, 0)(a, 0)\linebreak(a, 0)$.  
The generated test suite  includes  identical untimed sequences applied after different delays, i.e., the delays are needed for the fault detection.
\end{example}
\begin{algorithm}[!t]
\caption{Generating a complete test suite $E$ from  $E_{\mathit{init}}$ \label{algo:generation}}
\SetKwInOut{Input}{Input~}
\SetKwInOut{Output}{Output~}
\underline{\textbf{Procedure} Generate\_complete\_test\_suite}  $(E_{\mathit{init}}, \langle \mathcal{S}, \simeq, Mut(\mathcal{M})\rangle)$\;
\Input{$E_{init}$, an initial (possibly empty) set of timed input sequences}
\Input{$\langle \mathcal{S}, \simeq, Mut(\mathcal{M})\rangle$, a fault model}
\Output{$E$, a complete test suite for $\langle \mathcal{S}, \simeq, Mut(\mathcal{M})\rangle$}
Compute $\varphi_\mathcal{M}$, the boolean formula encoding all the mutants in $Mut(\mathcal{M})$\;
Build $\mathcal{D}$, the distinguishing automaton of $\mathcal{S}$ and $\mathcal{M}$\;
$\varphi_{\mathit{fd}} := \varphi_\mathcal{M}$\;
$E := \emptyset$\;
$E_{curr} := E_{init}$\;
\Repeat{$\alpha = \epsilon$}{
	$E := E \cup E_{\mathit{curr}}$\;
    $(\varphi_{\mathit{fd}}, \alpha) := $ Verify\_completeness$(\varphi_{\mathit{fd}}, E_{\mathit{curr}}, \mathcal{D})$\;
    $E_{\mathit{curr}} := \{ \alpha \}$\;
  }
\Return $E$\;
\end{algorithm}
\section{Experimental results\label{sec:exp}}
We implemented in the C++ language a prototype tool for an empirical evaluation
of the efficiency of the proposed methods. The experiment was realised with  a computer equipped with  the processor Intel(R) Core(TM) i5-7500 CPU @ 3.40 GHz and 32 GB RAM. The tool uses the solver 
cryptoSAT~\cite{Soos2009}. 
We present the results of the evaluation of the proposed methods with 
 randomly generated specifications and
 a specification of the trivial file transfer protocol. 
 
\subsection{Case of the trivial file transfer protocol}
We  consider a TFSM-T specification of the Trivial File Transfer Protocol (TFTP) \cite{RFC1350}.  TFTP is timeouts-dependent and it has already been tested in~\cite{Zhigulin2012}. Figure~\ref{fig:tftp} shows our  TFSM-T model for TFTP.  The model was designed according to the specification in ~\cite{RFC1350} and the modeling purposes  in~\cite{Zhigulin2012}. The modeling purposes 
 focus on the behavior of reading files. No more than three packages are transferred and the timeout for waiting for a packet equals three seconds. Moreover, we assume the file exists. Unlike the model in~\cite{Zhigulin2012} our model in Figure~\ref{fig:tftp} is complete and deterministic. For the sake of clarity, multiple transitions from one state to another  are represented with a single arrow.

\begin{figure}[t]
\centering
\scalebox{0.7}{
\begin{tikzpicture}[->,>=stealth',shorten >=1pt,auto,node distance=4cm, semithick]
    \node[state] (1)    at (6,5)                {Init};
    \node[state]         (2) at (1,3)  {Wait1};
    \node[state]         (3) at (6,1) {Wait2};
    \node[state]         (4) at (10,3) {Wait3};

    \path (1) edge node[midway, sloped, above] {RRQ/DATA1}         (2)
              edge [loop above] node[midway, right, xshift=0.2cm, yshift=0.3cm, align=left] {ACK1/Not defined \\ ACK2/Not defined \\ ACK3/Not defined \\ Error/Not defined \\ $\infty$}         (1)
          (2) edge [loop above] node[midway] {RRQ/Not defined}         (2)
          	  edge  node[midway, below left] {ACK1/DATA2}         (3)
              edge 	[bend left]node[midway, above left, align=left, xshift=0.7cm] {ACK2/ERROR \\ ACK3/ERROR \\ ERROR/Empty \\ $3$}         (1)
          (3) edge [loop below] node[midway, right, align=left] {RRQ/Not defined \\ ACK1/Ignore}         (3)
              edge              node[pos=0.3, below right] {ACK2/DATA3}   (4)
              edge 	node[pos=0.4, xshift=0.3cm, right,align=left] {ACK3/ERROR \\ ERROR/Empty \\ $3$}         (1)
          (4) 
              edge [loop below] node[midway, right, align=left] {RRQ/Not defined \\ ACK1/Ignore \\ ACK2/Ignore}         (4)
              edge[bend right] 	node[pos=0.3, above right, align=left] {ACK3/Empty \\ ERROR/Empty \\ $3$}         (1);
  \end{tikzpicture}
}
\caption{A TFSM-T modeling the TFTP; Init is the initial state; multiple transitions from one state to another  are represented with a single arrow.\label{fig:tftp}}
\end{figure}
\par  The tool generated within $0.31$s  a complete test suite of size $23$ for a mutation machine defining 1404928 mutants of  the TFSM-T in Figure~\ref{fig:tftp}. The maximal length of the tests is $5$. The mutation machine has $198$ mutated transitions. The mutated transitions were introduced as follows. Firstly,  we added $2$  finite timeouts and one mutated infinite timeout in all but the initial state; they are  $1$ and $5$. For each state and each input, we added a  transition to every state and for each output,  if the specified  output for the input is not "Not defined".
We noticed that the size of the generated complete test suite could  be reduced to 16 by removing the  seven tests which are  prefixes  of the others. Test suite optimization is a challenge to be addressed in further work.
\par We also generated a  complete chaos machine (the maximal  timeout is $5$) for the specification in Figure~\ref{fig:tftp}. 
The chaos machine defines an input/output transition from any state to any other  for each pair of input-output; it also defines a timeout transition from any state to any other  for each timeout between $1$ and $5$, and $\infty$. 
The resulting chaos mutation machine has $626$ more transitions than the specification, and have approximately $2.9 \times 10^{34}$ mutants. 
We generated 50 tests  within $1970$s; they can be reduced to 32 tests  by removing the tests' prefixes. The maximal length of the  tests is  $5$.  
\par We relaxed the modeling purpose by allowing $15$ packets instead of $3$. The corresponding  TFSM-T specification has $16$ states and we built a mutation machine with $9438$ mutated transitions defining $1.9 \times 10^{46}$ mutants. The tool generated a complete test suite of size $98$ within $555.14$s. The test suite can be amputated from $23$  tests' prefixes. The maximal length of the tests is $17$. 
\par  We have generated  complete test suites for fault domains of important sizes. \cite{ZhigulinYMC11} generates a complete test suite for a  chaos TFSM-T mutation machine defining less than $10^8$ mutants. The efficiency of our approach depends on the complexity of the mutation machine. In the average, the higher the size of the fault domain, the longer is the time for generating complete test suites.

\subsection{Case of randomly generated TFSMs-T}
\begin{table}[!t]
\centering
\begin{tabular}{|l|c|c|c|c|} \hline
 & \multicolumn{4}{|c|}{ \#mutants in the fault domain} \\ \hline
    \#states      & $\simeq 10^4$& $\simeq 10^{8}$ & $\simeq 10^{12}$ &  $\simeq 10^{18}$ \\ \hline \hline
 4 states &  (9, 0.04) & (26, 9.17)  & (30, 319.19) &  N/A  \\ \hline
 8 states &  (9, 0.5) & (19, 0.65)  & (32, 5.31)  &  (68, 864.06)  \\ \hline
 10 states &  (9, 4.73) & (20, 21.44)  & (30, 682.97)  &  (58, 250.72)  \\ \hline
 12 states &  (8, 56.36) & (20, 1.32)  & (25, 66.1)  &  (47, 593.07)   \\ \hline
 15 states &  (5, 168.24) & (17, 227.56)  & (33, 418.72)  &  (58, 64.55)   \\ \hline
\end{tabular}
\caption{Size of the generated complete test suites and generating time ; for an entry $(x,y)$, $x$ is the size of the test suite and  $y$ is the generating time in seconds \label{tab:bench}} 
\end{table} 
We randomly generated specification machines for given numbers of states and 
mutation machines for the specification machines. The generated specification and mutation machines have 2 inputs and 2 outputs. The maximal timeout in the specification machines is $3$ and the one in the mutation machines is $5$.  The result of the evaluation is presented in Table~\ref{tab:bench}. We have measured  the generating time for  each test suite  and  the size of each test suite. An entry $(x,y)$ of Table~\ref{tab:bench}  indicates the size of the test suite $x$  and the corresponding generating time $y$ in seconds (s). 
\section{Conclusion\label{sec:conc}}
We lifted a constraint solving-based test generation approach to generate complete test suite for fault models for TFSMs-T. We defined the distinguishing automaton with  timeouts  which is  used to build SAT constraints, verify the completeness  of  test suites  and generate complete test  suites. We implemented a prototype tool for the proposed test verification and generation methods. The empirical evaluation of the methods indicates that they apply on industrial-size TFSMs-T specifying real systems.
\par  Further work is in progress to reduce the size of the test suites and lift the proposed methods to TFSMs expressing time constraints beyond the timeouts.


\end{document}